\patchcmd{\section}{\scshape}{\bfseries\scshape}{}{}
\renewcommand{\@secnumfont}{\bfseries}
\DeclareMathOperator{\Tr}{Tr}
\DeclareMathOperator{\Id}{Id}
\DeclareMathOperator{\Supp}{Supp}
\definecolor{NavyBlue}{rgb}{0.1,0.1,0.6}
\definecolor{OliveGreen}{rgb}{0,0.6,0}
\def\bfone{{\boldsymbol 1}}
\newcommand{\diff}{\mathrm{d}}
\newcommand{\N}{\mathbb{N}}
\newcommand{\Z}{\mathbb{Z}}
\newcommand{\R}{\mathbb{R}}
\renewcommand{\leq}{\leqslant}
\renewcommand{\geq}{\geqslant}
\renewcommand{\hat}{\widehat}
\newtheorem{remark}{Remark}
\newtheorem{proposition}{Proposition}
\newtheorem{thm}{Theorem}
\newtheorem{lem}{Lemma}
\newtheorem{coro}{Corollary}
\g@addto@macro{\endabstract}{\@setabstract}
\newcommand{\authorfootnotes}{\renewcommand\thefootnote{\@fnsymbol\c@footnote}}%
\title[Acceleration of Gossip Algorithms]{Acceleration~of~Gossip~Algorithms through~the~Euler--Poisson--Darboux~Equation}
\begin{document}
\begin{center}
	\LARGE 
	Acceleration~of~Gossip~Algorithms through~the~Euler--Poisson--Darboux~Equation 
	\par \bigskip
	
	\normalsize
	\authorfootnotes
	Raphaël Berthier\textsuperscript{1} and
	Mufan (Bill) Li\textsuperscript{2} \par \bigskip
	
	\textsuperscript{1}EPFL\footnote{While RB is currently affiliated to EPFL, the research presented here was conducted mainly at Inria, Département d’informatique de l’ENS, PSL Research University, Paris, France.}
	\smallskip \par
	\textsuperscript{2}University of Toronto and Vector Institute 
% 	\\ 
% 	Toronto, Canada \bigskip
\end{center}

\begin{abstract}

Gossip algorithms and their accelerated versions have been studied exclusively in discrete time on graphs. 
In this work, we take a different approach, and consider the scaling limit of gossip algorithms in both large graphs and large number of iterations. 
These limits lead to well-known partial differential equations (PDEs) with insightful properties. 
On lattices, we prove that the non-accelerated gossip algorithm of \citet{boyd2006randomized} converges to the heat equation, and the accelerated Jacobi polynomial iteration of \citet{berthier2020accelerated} converges to the Euler--Poisson--Darboux (EPD) equation --- a damped wave equation. 
Remarkably, with appropriate parameters, the fundamental solution of the EPD equation has the ideal gossip behaviour: a uniform density over an ellipsoid, whose radius increases at a rate proportional to~$t$ --- the fastest possible rate for locally communicating gossip algorithms. 
This is in contrast with the heat equation where the density spreads on a typical scale of~$\sqrt{t}$. 
Additionally, we provide simulations demonstrating that the gossip algorithms are accurately approximated by their limiting PDEs. 

\end{abstract}

\section{Introduction}

In computer science, the large amount of data and the large size of computation networks motivate a growing interest in distributed algorithms, see \citep{assran2020advances} for instance. Among these, decentralized algorithms---where there is no master node aggregating information and distributing tasks---are appreciated for their flexibility, their robustness to node/links failures and their scalability. In this paper, we study the averaging problem, or gossip problem, a toy problem of decentralized computing where the network aims to compute the average of real values distributed along the nodes of the network, see \citep{boyd2006randomized} for instance. The algorithms that tackle this task---called gossip algorithms---are used as primitives in more complex distributed algorithms, including distributed optimization or distributed reinforcement learning, see, e.g., \citep{dimakis2010gossip,assran2020advances,szorenyi2013gossip}.

To be specific, we give each node $v$ a real value $x_0(v)$, and gossip algorithms aim at computing the average of the values held by the nodes in a decentralized fashion. \citet{boyd2006randomized} proposed the following gossip algorithm, that we refer to as \emph{simple gossip}: at each communication round $n$, each node $v$ replaces its current value $x_n(v)$ by a weighted average of the value of its neighbors and its own current value. More specifically,
\begin{align}
\label{eq:simple-gossip-general}
&x_{n+1}(v) = \sum_{\eta:\eta\sim v} W_{v,\eta} x_{n}(\eta) \, ,
\end{align}
Here, we use $\eta \sim v$ to denote that $\eta,v$ are neighbouring nodes, and $W_{v,\eta}$ to denote the weights in the averaging operation. As the number of iterations grows, the running value $x_n(v)$ of each node can be shown to converge to the global average of the initial values. However, the convergence is notoriously slow in networks with a finite-dimensional geometry, such as grids or random geometric graphs. Heuristically, simple gossip averages locally the values in the network, but fails at spreading information quickly at a larger distance. This is analogous to the heat diffusion in a continuous media that homogenizes quickly locally but only slowly on large scales: for this reason, the slow convergence of simple gossip is sometimes called a \emph{diffusivity} problem. We make this analogy more precise later in this section. 

To reach super-diffusive rates of convergence, several accelerations of the simple gossip algorithm were proposed. We mention a few of them that are representative or related to the approach of this paper. \citet{dimakis2008geographic} use the knowledge of the position of the nodes in space in order to give inertia to the information in a specific direction. \citet{even2021continuized} mimicked the proof of the accelerated coordinate gradient descent of \citet{nesterov2012efficiency} to obtain an accelerated gossip algorithm in the form of an iteration over several variables. \citet{berthier2020accelerated} designed second-order iterations using an orthogonal polynomial point of view coupled with a Jacobi approximation of the spectral measure of the network graph: the resulting accelerated gossip algorithm is called the Jacobi polynomial iteration. \citet{sardellitti2010fast} built on the analogy of simple gossip with diffusion processes; they proposed to add the discrete equivalent of an advection term in the iteration to accelerate the homogenization. 

\begin{figure}[t]
	\begin{subfigure}{0.49\linewidth}
		\begin{center}
		
			\begin{tikzpicture}[scale = 0.06]
    \tikzstyle{point}=[draw,circle,fill];
    \tikzstyle{fleche}=[-,line width=1];
    \node[] (v-1) at (-12,0) {\huge$\cdots$};
    \node[point] (v0) at (0,0) {};
    \node[point] (v1) at (16,0) {};
    \node[point] (v2) at (32,0) {};
    \node[point] (v3) at (48,0) {};
    \node[point] (v4) at (64,0) {};
    \node[] (v5) at (77,0) {\huge$\cdots$};
    \draw[fleche] (v0) to ++(-6,0);
    \draw[fleche] (v0) to (v1);
    \draw[fleche] (v1) to (v2);
    \draw[fleche] (v2) to (v3);
    \draw[fleche] (v3) to (v4);
    \draw[fleche] (v4) to ++(6,0);
	\end{tikzpicture}
	\vspace{1cm}
	
		\end{center}
		\vspace*{-0.5cm}
		\caption{graph structure}
	\end{subfigure}
	\begin{subfigure}{0.49\linewidth}
		\begin{center}
			\includegraphics[width = \linewidth]{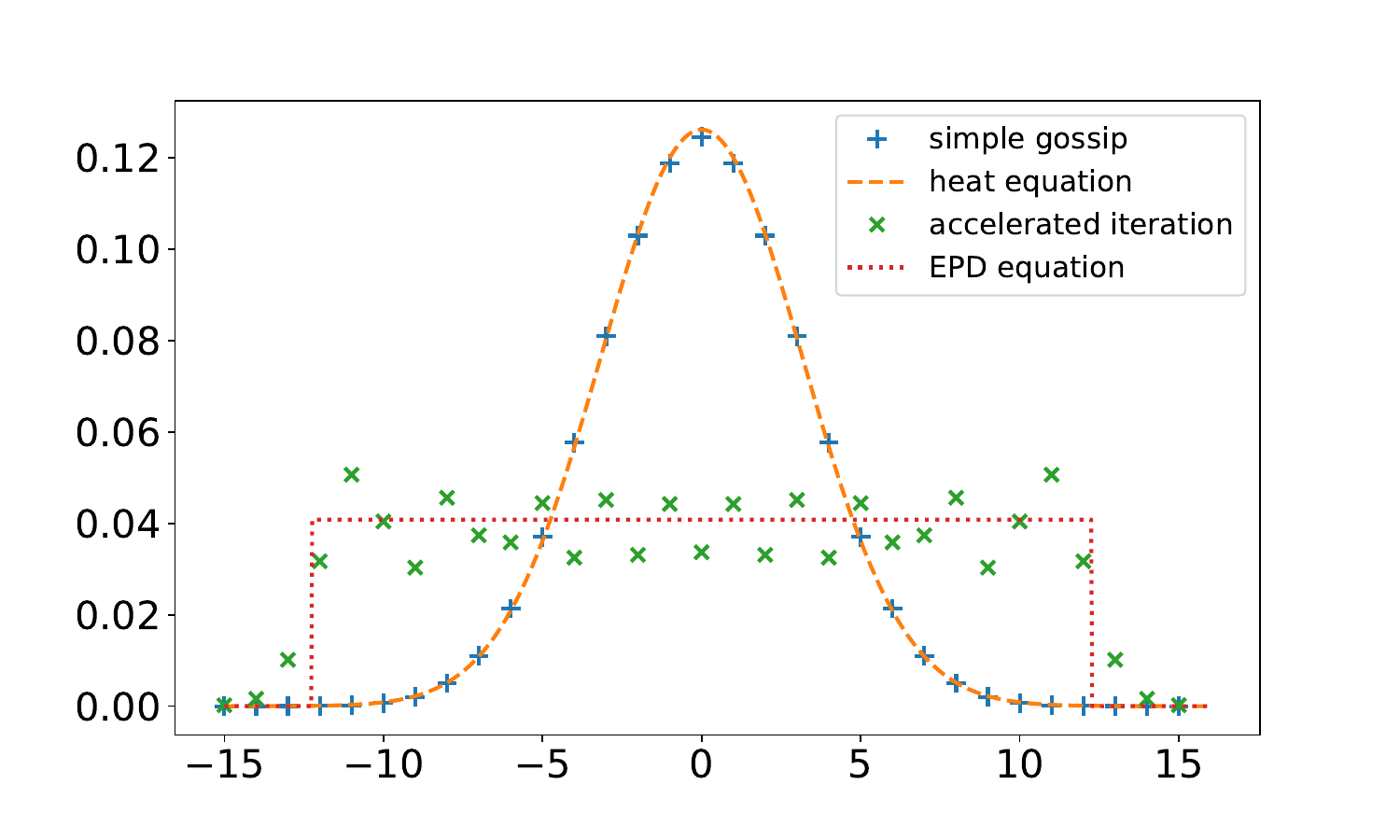}
		\end{center}
		\vspace*{-0.5cm}
		\caption{n = 15}
	\end{subfigure}
	\begin{subfigure}{0.49\linewidth}
		\begin{center}
			\includegraphics[width = \linewidth]{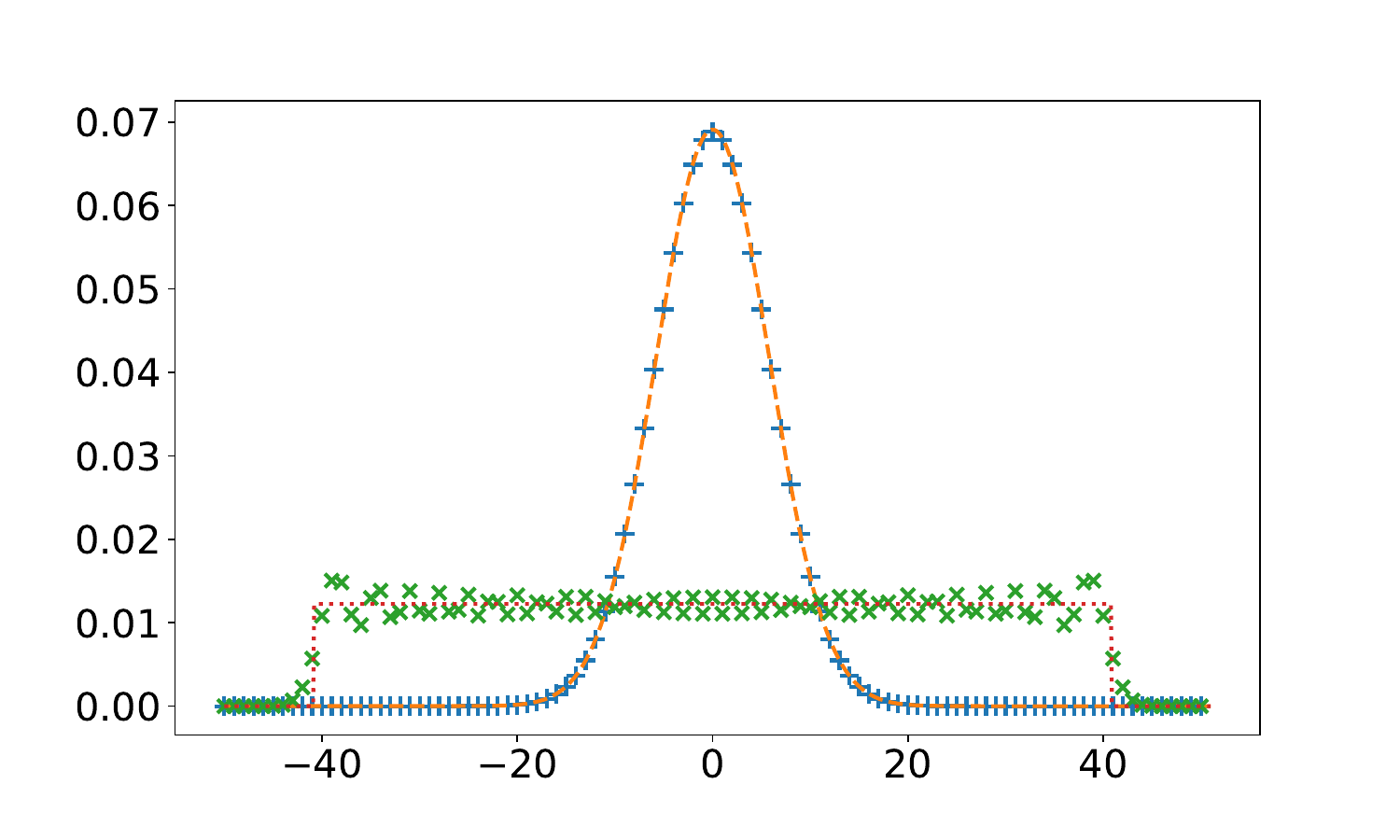}
		\end{center}
		\vspace*{-0.5cm}
		\caption{n = 50}
	\end{subfigure}
	\begin{subfigure}{0.49\linewidth}
		\begin{center}
			\includegraphics[width = \linewidth]{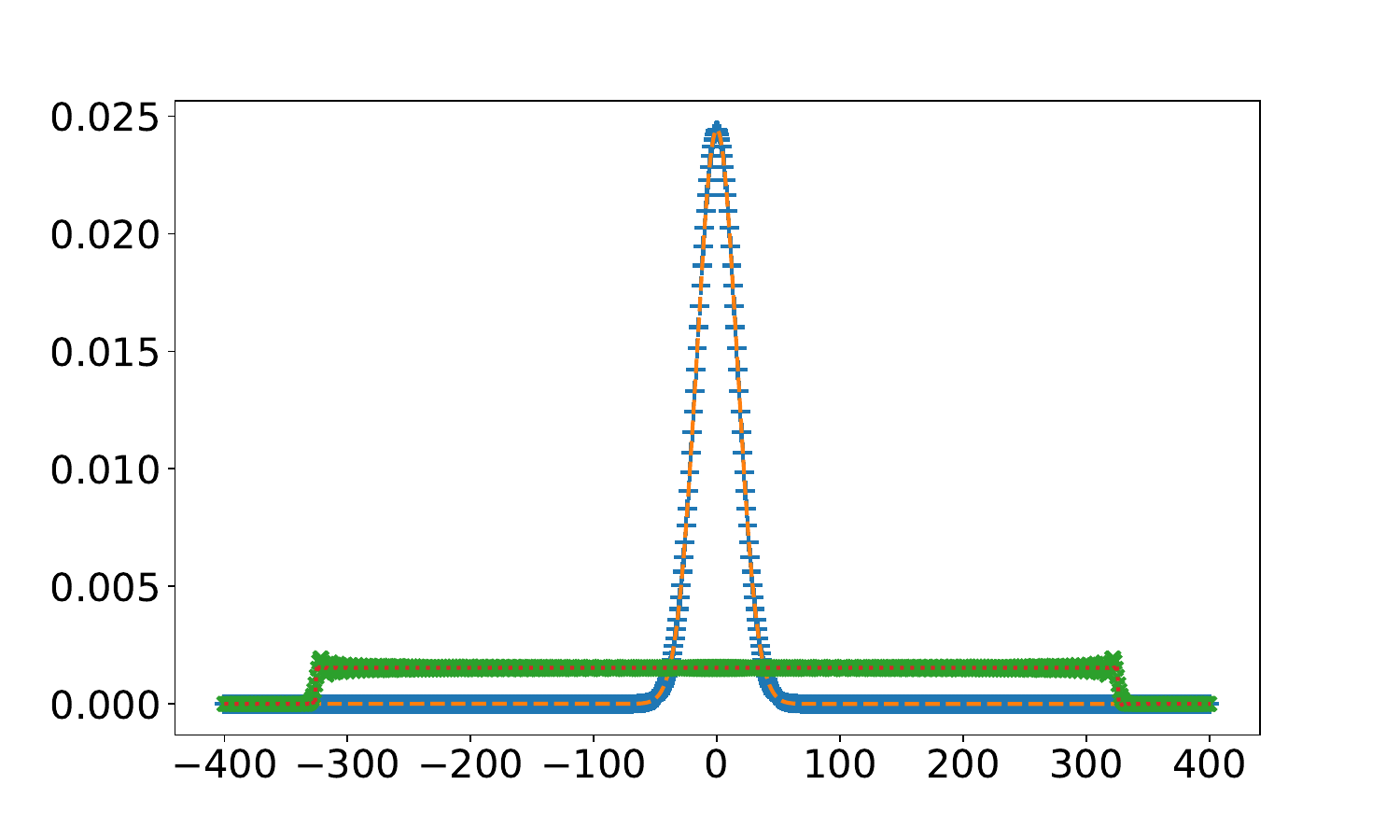}
		\end{center}
		\vspace*{-0.5cm}
		\caption{n = 200}
		\label{subfig:jpi}
	\end{subfigure}
	\caption{Comparison between gossip algorithms and their scaling limits: the simple gossip of \citep{boyd2006randomized} and the heat equation, the accelerated Jacobi polynomial iteration of \citep{berthier2020accelerated} and the Euler--Poisson--Darboux equation. All iterations were run on the line graph $\Z$ ($d=1$) and initialized from $x_0 = \bfone_0$, the vector such that $\bfone_0(0) = 1$ and $\bfone_0(v) = 0$ otherwise. We show the results $x_n(v)$ as a function of $v \in \Z$ for different numbers of iterations $n = 15,50,200$. Note that as the number of iteration increases, the description through the scaling limits improves in accuracy. The accelerated Jacobi polynomial iteration diffuses faster; it has a different scaling than the simple gossip algorithm. }
	\label{fig:scaling}
\end{figure}

\medskip\noindent
\textbf{Contributions.} We analyze and design gossip algorithms through associated partial differential equations (PDEs). We study gossip algorithms on regular lattices on $\Z^d$ (for instance, grids), and draw a rigorous connection to their corresponding PDEs via scaling limits in time and space. 

Our first contribution is to show that the scaling limit of the simple gossip algorithm on a lattice is the heat equation (see, e.g., \citep{evans1998partial})
\begin{align}
\label{eq:heat-equation}
\partial_t u  =  \frac{1}{2} \nabla_y \cdot (Q \nabla_y u)  \, , \qquad u = u(t,y) \, .
\end{align}
Here, $\nabla_y$ and $\nabla_y \cdot$ denote respectively the gradient and the divergence operator in the variable $y$. $Q$ is a $d\times d$ matrix quantifying the potential anisotropy of the diffusion: it is a function of the local averaging operation, defined more precisely in Section \ref{sec:setting}. The fundamental solution of the heat equation~\eqref{eq:heat-equation} (the weak solution when initialized at the Dirac mass $u(0,.) = \delta_0$ \citep{evans1998partial}) is a centered Gaussian density 
\begin{align}
\label{eq:solution-heat}
u(t,y) = \frac{1}{(2\pi)^{d/2}t^{d/2}(\det Q)^{1/2}} \exp\left(-\frac{1}{2t} \left\langle y, Q^{-1}y \right\rangle\right) \, .
\end{align}
The formula above shows the sub-optimality of the simple gossip method: the mass spreads on a typical scale $\Vert y \Vert \approx \sqrt{t}$, while we would like the scale to be $\Vert y \Vert \approx t$; indeed, the gossiped information can travel at most at distance $\Theta(t)$ in a time $t$ (due to the speed limit of local communications), and we would like our gossip algorithms to match this optimal speed of diffusion. Equivalently, the solution decays to $0$ at the rate $1/t^{d/2}$ in $\Vert . \Vert_\infty$, while we would like the rate to be $1/t^d$.

To this goal, we design an accelerated second-order gossip iteration: we choose the recursion coefficients so that the iteration converges to the Euler--Poisson--Darboux (EPD) equation  \citep{eulerinstitutiones,poisson1823memoire,darboux1896leccons} 
\begin{align}
\label{eq:epd}
\partial_{tt} u + \frac{d+1}{t} \partial_t u = \nabla_y \cdot \left(Q\nabla_y u\right) \, .
\end{align}
See Appendix~\ref{ap:epd} and \citet{bresters1973equation} for an introduction to the EPD equation in a more general form. The EPD equation is a wave equation with a decaying damping term. Intuitively, the wave component gives inertia to the diffusion so that the resulting PDE mixes faster, while the damping term reduces oscillation. For the precise value of the damping coefficient $\frac{d+1}{t}$ and the initial conditions $u(0,.) = \delta_0 \, ,\partial_t u(0,.) = 0$, the fundamental solution has a remarkable formula 
\begin{align}
\label{eq:fundamental-solution-d/2}
u(t,y) = \frac{\Gamma(d/2+1)}{\pi^{d/2}(\det Q)^{1/2}} \frac{1}{t^d} \bfone_{\left\{\left\langle y, Q^{-1} y \right\rangle \leq t^2 \right\}} \, .
\end{align}
This method thus has an optimal scaling: the mass spreads on a typical scale $\Vert y \Vert \approx t$ and the solution decays to $0$ at the rate $1/t^d$. Furthermore, the fundamental solution also has the perfect shape: the averaging is uniform on the ellipsoid $\left\{\left\langle y, Q^{-1} y \right\rangle \leq t^2 \right\}$. We emphasize that this solution embodies the ideal gossip behaviour: uniformly spreading information at a sharp rate (again this is the fastest possible rate for locally communicating gossip algorithms). 

Notably, the Jacobi polynomial iteration of \citet{berthier2020accelerated} is one of the possible accelerations that we identify as scaling to the EPD equation \eqref{eq:epd}. As a consequence, this paper can be seen as a more intuitive derivation of the Jacobi polynomial iteration, that was derived through algebraic methods on polynomials. 

In Figures~\ref{fig:scaling}-\ref{fig:diffusion-shape}, we provide simulations in dimension $d=1$ and $d=2$. They show that the limiting PDEs are accurate in describing the behavior of gossip algorithms as the number of iterations grows, and that the accelerated methods achieve faster diffusion.  

\begin{figure}
	\begin{subfigure}{0.48\linewidth}
		\begin{center}
			\includegraphics[width = \linewidth]{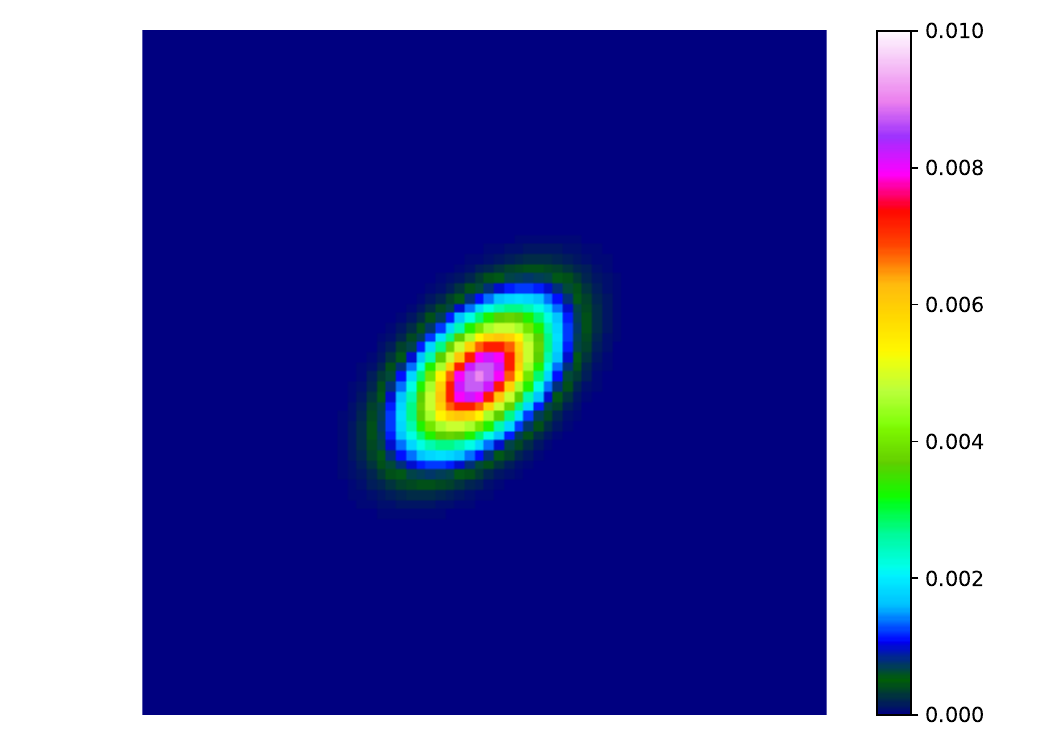}
		\end{center}
		\caption{Simple gossip}
	\end{subfigure}
	\begin{subfigure}{0.48\linewidth}
		\begin{center}
			\includegraphics[width = \linewidth]{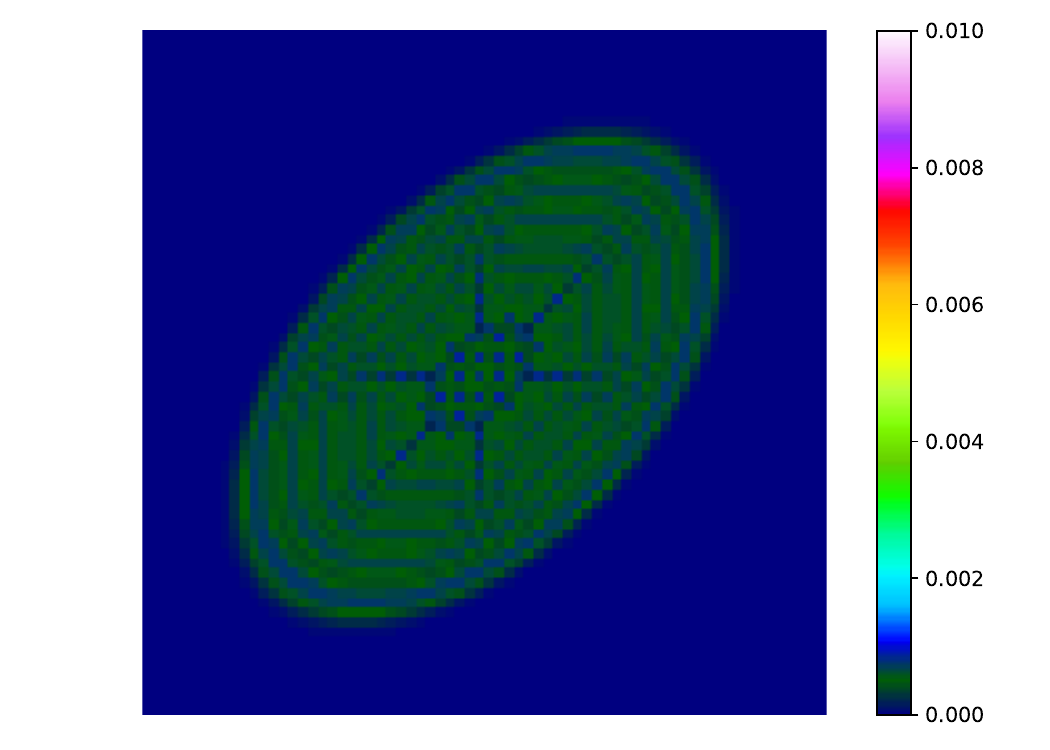}
		\end{center}
		\caption{Accelerated Jacobi polynomial iteration}
	\end{subfigure}
	\caption{Comparison between simple gossip and the accelerated Jacobi polynomial iteration on the triangular lattice (see Equation \eqref{eq:triangular} or Figure \ref{fig:drawing-graphs}(B)). We initialize from $x_0 = \bfone_0$ and we show the iterates $x_{30} = (x_{30}(v))_{v \in \Z^2}$ using a color scale. The accelerated Jacobi polynomial iteration diffuses faster than simple gossip: the mass is distributed more evenly and on a larger ellipsoid.}
	\label{fig:diffusion-shape}
\end{figure}

We intend this paper to be exploratory in the connection between gossip algorithms and PDEs, therefore our results are restricted to simplified settings.  
More precisely, our derivations and proofs assume synchronous communications---where all nodes communicate synchronously---as frequently done in the analysis of gossip algorithms, see for instance \citep{cao2006accelerated,rebeschini2017accelerated,berthier2020accelerated}. 
Moreover, the network graph must be a lattice and the gossip operation must be translation-invariant. 
Indeed, this restriction is crucial to our proof technique which relies heavily on using the Fourier transform. 
However, a Fourier approach allows us to study the fundamental solution, which cannot be easily handled by standard numerical analysis techniques \citep{genis1984finite, celia1992numerical}. 
At the same time, we believe that the insight from accelerating through the EPD equation can be used in much wider settings. 
For instance, the Jacobi polynomial iteration accelerates in many graphs with a finite-dimensional geometry, including regular grids, percolation bonds, or random geometric graphs \citep{berthier2020accelerated}. 
We expect the intuitions presented in this paper to have the same universality. 

\medskip\noindent
\textbf{Structure of the paper.} In Section \ref{sec:setting}, we set up our problem: we properly define the lattices, the gossip problem and the simple gossip algorithm. 
In Section \ref{sec:heuristic}, we give heuristic but intuitive derivations showing why simple gossip scales to the heat equation and how to build a method scaling to the EPD equation. 
Furthermore, we draw the connection with Jacobi polynomial iterations, and we discuss the open problem of extending the results to asynchronous settings or on more general graphs. 

Section \ref{sec:rigorous} provide some rigorous support to the above derivations. The simple gossip algorithm can be seen as the iteration of the law of a random walk on~$\Z^d$; the convergence to a Gaussian random variable is made rigorous by the central limit theorem and the local central limit theorem, see Section~\ref{sec:rigorous-simple}. In Section~\ref{sec:rigorous-jacobi}, we provide analog results for the convergence of the Jacobi polynomial iteration to the EPD equation~\eqref{eq:epd}: a weak limit theorem and a stronger result of local type. Finally, in Section~\ref{sec:sharp-rates}, we apply the latter result to obtain an asymptotic equivalent\footnote{We use the notation $u_n \underset{n\to\infty}{\sim} v_n$ to mean that $u_n/v_n$ converges to $1$ as $n \to \infty$.} (not only a domination) of the convergence rate for the Jacobi polynomial iteration on $\Z^d$:
\begin{align*}
\sum_{v \in \Z^d} x_n(v)^2 \underset{n\to\infty}{\sim} \frac{1}{(\det Q)^{1/2} |B(0,1)|} \frac{1}{n^d} \, , 
\end{align*}
where $|B(0,1)|$ is the volume of the unit ball in $\mathbb{R}^d$. 

\medskip\noindent
\textbf{Notation.} For $v \in \Z^d$, we denote $\bfone_v = (\bfone_v(w))_{w \in \Z^d}$ the vector with entry $\bfone_v(v)=1$ and all other entries equal to $0$. We denote $e_1, \dots, e_d$ the canonical basis of $\R^d$. $\lfloor s \rfloor$ denotes the integer part of a real number $s$. 

\section{Setting}
\label{sec:setting}

\noindent
\textbf{Lattices and translation-invariant gossip operation.} 
In this section, we introduce the notations and lattices for the gossip problem. 
In order to make the rescaling of the processes indexed by the vertices more natural, we consider here only graphs with vertex set $\Z^d$.
Instead of defining the edge sets using graph notation, we will use a more convenient (and equivalent) definition through the local averaging operation. 
Let $\omega = (\omega(v))_{v \in \Z^d}$ be a vector of non-negative reals, representing a local averaging filter on~$\Z^d$. 
Here we assume that $\omega$ has finite support and that $\sum_{v \in \Z^d} \omega(v) = 1$. 
For a vector $x = (x(v))_{v\in\Z^d}$, we define the local averaging operation by the discrete convolution on~$\Z^d$
\begin{align*}
(\omega * x)(v) = \sum_{\eta \in \Z^d} \omega(v-\eta) x(\eta) \, , 
\end{align*}
which represents a local weighted average of the neighboring values. 
For this operation to represent local communications on the graph, we would require all pairs $\{v,\eta\}$ such that $\omega(v-\eta) > 0$ to be connected in the graph. We thus take $\{\{v,\eta\} \in (\Z^d)^2 \, | \, \omega(v-\eta) > 0 \}$ to be the edge set. The support of $\omega$ represents the communication range from any vertex in $\Z^d$. 
Typically, we can take 
\begin{align}
\label{eq:standard-diffusion-vector}
\omega = \frac{1}{2d} \sum_{i = 1}^d \left(\bfone_{e_i} + \bfone_{-e_{i}}\right) \, .
\end{align}
This corresponds to allowing only nearest neighbors communicate in $\Z^d$. In this case, the underlying graph is the standard lattice on $\Z^d$. The triangular lattice in dimension $2$ can be obtained by taking 
\begin{align}
\label{eq:triangular}
\omega = \frac{1}{6}\left(\bfone_{(1,0)} + \bfone_{(-1,0)} + \bfone_{(0,1)}+ \bfone_{(0,-1)}+ \bfone_{(1,1)}+ \bfone_{(-1,-1)}\right) \, .
\end{align}
See Figure \ref{fig:drawing-graphs} for drawings of these lattices. 
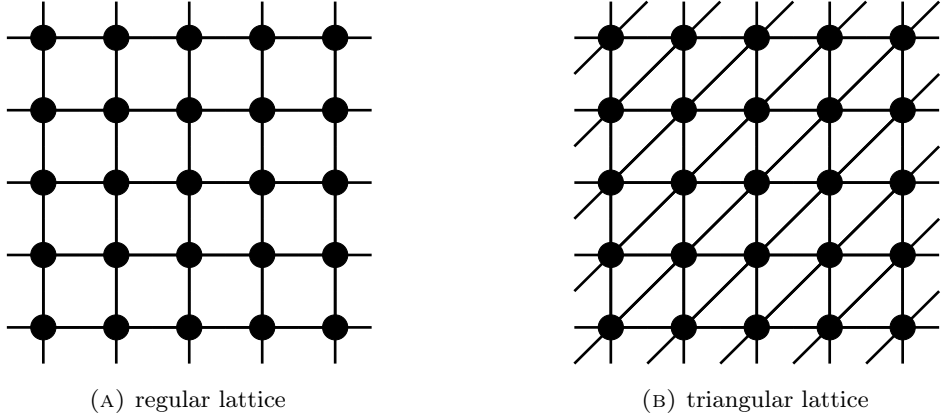
\begin{figure}
	\begin{subfigure}{0.49\linewidth}
		\begin{center}
		\begin{tikzpicture}[scale = 0.06]
    \tikzstyle{point}=[draw,circle,fill];
    \tikzstyle{fleche}=[-,line width=1];
    \foreach \i in {0,16,32,48,64}{
    \foreach \j in {0,16,32,48,64}{
    \node[point] (v0) at (\i,\j) {};
    }}
    \foreach \i in {0,16,32,48,64}{
    \foreach \j in {0,16,32,48}{
    \draw[fleche] (\i,\j) to ++(0,16);
    }}
        \foreach \i in {0,16,32,48}{
    \foreach \j in {0,16,32,48,64}{
    \draw[fleche] (\i,\j) to ++(16,0);
    }}
    \foreach \i in {0,16,32,48,64}{
    \foreach \j in {0,16,32,48,64}{
    \draw[fleche] (\i,\j) to ++(0,8);;
    \draw[fleche] (\i,\j) to ++(0,-8);;
        \draw[fleche] (\i,\j) to ++(8,0);;
    \draw[fleche] (\i,\j) to ++(-8,0);;
    }}
	\end{tikzpicture}
		\end{center}
		\caption{regular lattice}
	\end{subfigure}
	\begin{subfigure}{0.49\linewidth}
		\begin{center}
\begin{tikzpicture}[scale = 0.06]
    \tikzstyle{point}=[draw,circle,fill];
    \tikzstyle{fleche}=[-,line width=1];
    \foreach \i in {0,16,32,48,64}{
    \foreach \j in {0,16,32,48,64}{
    \node[point] (v0) at (\i,\j) {};
    }}
    \foreach \i in {0,16,32,48,64}{
    \foreach \j in {0,16,32,48}{
    \draw[fleche] (\i,\j) to ++(0,16);
    }}
        \foreach \i in {0,16,32,48}{
    \foreach \j in {0,16,32,48,64}{
    \draw[fleche] (\i,\j) to ++(16,0);
    }}
            \foreach \i in {0,16,32,48}{
    \foreach \j in {0,16,32,48}{
    \draw[fleche] (\i,\j) to ++(16,16);
    }}
    \foreach \i in {0,16,32,48,64}{
    \foreach \j in {0,16,32,48,64}{
    \draw[fleche] (\i,\j) to ++(0,8);;
    \draw[fleche] (\i,\j) to ++(0,-8);;
        \draw[fleche] (\i,\j) to ++(8,0);;
    \draw[fleche] (\i,\j) to ++(-8,0);;
        \draw[fleche] (\i,\j) to ++(8,8);;
    \draw[fleche] (\i,\j) to ++(-8,-8);;
    }}
	\end{tikzpicture}
		\end{center}
		\caption{triangular lattice}
	\end{subfigure}
	\caption{Different graph structures can be obtained from the vertex set $\Z^2$ by varying $\omega$ and thus the edge set. The regular lattice is obtained by choosing $\omega$ as in \eqref{eq:standard-diffusion-vector} (see left figure) and the triangular lattice is obtained by choosing $\omega$ as in \eqref{eq:triangular} (see right figure).}
	\label{fig:drawing-graphs}
\end{figure}
% Intuitively, vectors $\omega$ with a larger support translates to larger ranges of communications. 

The synchronous simple gossip algorithm iterates the local averaging operation:
\begin{align}
\label{eq:grid-synchronous-gossip}
x_{n+1}(v) = (\omega * x_n)(v) = \sum_{\eta \in \Z^d} \omega(v-\eta) x_n(\eta) \, , \qquad v \in \Z^d \, .
\end{align}
Note that this equation is \eqref{eq:simple-gossip-general} in the case where the averaging weights $W_{v-\eta} = \omega(v-\eta)$ are translation-invariant. 

\medskip
Throughout this paper, we assume that $\omega$ is centered, i.e., 
\begin{align*}
\sum_{v\in \Z^d} \omega(v) v = 0 \, ,
\end{align*}
and that the covariance matrix 
\begin{align}
\label{eq:cov_matrix}
Q = \sum_{v\in\Z^d} \omega(v) v v^\top \, , 
\end{align}
has full rank. Heuristically, this ensures that we do not have a drift term, and that we average in all directions. 
We note that our approach can still obtain similar results without these assumptions, however, 

% \medskip\noindent 
% \textbf{Goal.}
% % 
% We study the decay to $0$ of the gossip algorithms when initialized from $\bfone_0$, the vector with entry $\bfone_0(0)=1$ and all other entries $\bfone_0(v)$, $v \in \Z^d\backslash\{0\}$, equal to~$0$. 
% % 
% Here we note that the study of gossip algorithms does not have a well specified performance criterion, but we provide some intuitions for our choice. 
% % 
% % Here we note that without specifying an initial vector $x_0 = (x_0(v))_{v\in\Z^d}$, the gossip problem is ill-posed. 
% % 
% By analogy with PDEs, the solution we obtain with this initial condition is the \emph{fundamental solution} of the gossip iterations; the solutions for other initializations $x_0$ can be obtained by convolution $x_0$ with the fundamental solution.  
% % 
% Furthermore, since gossip algorithms are mass preserving, the maximum value represents the worst load one node of the network has. Hence the rate of decay provides a measure of load balancing performance. 

\medskip\noindent 
\textbf{Goal.} In the gossip problem, we give a real value $x_0(v)$ to each one of the nodes, and the goal is to compute the average of those values. In this paper, we consider graphs with an infinite number of nodes; the average of the infinite number of initial values is ill-defined. Thus the gossip problem is misspecified in this context. However, the gossip iterations like \eqref{eq:simple-gossip-general} make sense even on an infinite graph. In the following, we study the scaling limits of these iterations. This provides an intuition on the behavior on large but finite graphs. 

More precisely, on $\Z^d$, we study the decay to $0$ of the gossip algorithms when initialized from $x_0 = \bfone_0$, the vector with entry $\bfone_0(0)=1$ and all other entries $\bfone_0(v)$, $v \in \Z^d\backslash\{0\}$, equal to~$0$. By analogy with PDEs, the solution we obtain with this initial condition is the \emph{fundamental solution} of the gossip iterations; the solutions for other initializations $x_0$ can be obtained by convolution of $x_0$ with the fundamental solution.

\section{Heuristic Derivations and Discussions}
\label{sec:heuristic}

Before we go into the rigorous convergence results, we would like to provide a heuristic derivation. 
While these are not proofs --- they fail whenever the solution is not sufficiently smooth for the required Taylor expansions --- we find these provide very strong intuition as to why the gossip algorithms converge to the corresponding PDEs. 

\subsection{Scaling limit of the simple gossip algorithm to the heat equation}
\label{sec:scaling}

Through a change of variables in \eqref{eq:grid-synchronous-gossip}, we rewrite the synchronous simple gossip algorithm iterates as
\begin{align}
\label{eq:aux-1}
x_{n+1}(v) =  \sum_{\eta \in \Z^d} \omega(\eta) x_n(v-\eta) \, , \qquad v \in \Z^d \, .
\end{align}
Let $\Delta t, \Delta y > 0$ denote two scaling parameters. For $t \in (\Delta t) \N = \{(\Delta t) n, \, n \in \N\}$ and $y \in (\Delta y) \Z^d = \{(\Delta y) v , \, v \in \Z^d\}$, we define the scaled field 
\begin{align*}
u(t,y) = x_{\frac{t}{\Delta t}}\left(\frac{y}{\Delta y}\right) \, . 
\end{align*}
The iteration~\eqref{eq:aux-1} can be reformulated in terms of $u$:
\begin{equation*}
u(t + \Delta t, y) = \sum_{\eta \in \Z^d} \omega(\eta) u(t, y - (\Delta y) \eta) \, .
\end{equation*}
We now show that under a proper scaling for $\Delta t, \Delta y \to 0$, the above equation converges to a PDE in $u$. Recall that $\sum_{v \in \Z^d} \omega(v) = 1$, thus
\begin{equation*}
u(t + \Delta t, y) - u(t,y) = \sum_{\eta \in \Z^d} \omega(\eta) \left[ u(t, y - (\Delta y) \eta) - u(t,y) \right]\, .
\end{equation*}

Before we take $\Delta t, \Delta y \to 0$, we first Taylor expand the differences: 
\begin{align*}
u(t + \Delta t, y) - u(t,y) &= (\Delta t) \partial_t u + o(\Delta t) \, , \\
u(t, y - (\Delta y) \eta) - u(t,y) &= - (\Delta y) \left\langle \nabla_y u, \eta \right\rangle + \frac{(\Delta y)^2}{2} \left\langle \eta, \left(\nabla^2_y u\right)  \eta \right\rangle + o\left((\Delta y)^2\right) \, ,
\end{align*}
where all derivatives are taken in $(t,y)$. Note that we make a second-order expansion in space: this is due to the fact that the first-order terms cancel below. We obtain 
\begin{align*}
&(\Delta t) \partial_t u + o(\Delta t) \\
&\qquad = - (\Delta y) \left\langle \nabla_y u, \sum_{\eta \in \Z^d} \omega(\eta) \eta \right\rangle + \frac{(\Delta y)^2}{2} \sum_{\eta \in \Z^d} \omega(\eta)\left\langle \eta, \left(\nabla^2_y u\right)  \eta \right\rangle + o\left((\Delta y)^2\right) \, .
\end{align*}
As $\omega$ is centered, the first term of the right-hand side is zero. Moreover, we can use \eqref{eq:cov_matrix} to rewrite 
\begin{align*}
\sum_{\eta \in \Z^d} \omega(\eta)\left\langle \eta, \left(\nabla^2_y u\right)  \eta \right\rangle = \Tr\left(Q \nabla^2_y u\right) = \nabla_y \cdot (Q \nabla_y u) \, .
\end{align*}
This gives us 
\begin{align*}
(\Delta t) \partial_t u + o(\Delta t) =  \frac{(\Delta y)^2}{2} \nabla_y \cdot (Q \nabla_y u) + o\left((\Delta y)^2\right) \, .
\end{align*}
Finally, we choose the scaling $\Delta t = (\Delta y)^2$ and by identifying the highest-order terms, we obtain the scaling to the heat equation in the limit as $\Delta t, \Delta y \to 0$ 
\begin{align*}
\partial_t u  =   \frac{1}{2}\nabla_y \cdot (Q \nabla_y u)  \, .
\end{align*}
Here, $Q$ quantifies the potential anisotropy of the diffusion. In the case of the standard grid~\eqref{eq:standard-diffusion-vector}, we have $Q = \frac{1}{d} \Id$ and thus we obtain an isotropic heat equation $\partial_t u  =   \frac{1}{2d}\Delta_y u$, where $\Delta_y$ denotes the Laplacian in the variable $y$. 

\subsection{Second-order iteration scaling to the Euler--Poisson--Darboux equation}
\label{sec:second-order}

We now consider second-order iterations of the form 
\begin{align}
\label{eq:second-order-5}
x_{n+1}(v) = a_n \sum_{\eta \in \Z^d} \omega(\eta) x_n(v-\eta) + b_n x_n(v) - c_n x_{n-1}(v) \, .
\end{align}
We impose $a_n + b_n - c_n = 1$ so that the sum of the coordinates of the vectors $x_n$ remains constant. We show that, under specific asymptotics for $a_n,b_n,c_n$, the iteration~\eqref{eq:second-order-5} scales to the EPD equation. As in Section~\ref{sec:scaling}, we introduce scaling parameters $\Delta t, \Delta y > 0$ and the rescaled iterates 
\begin{align*}
u(t,y) = x_{\frac{t}{\Delta t}}\left(\frac{y}{\Delta y}\right) \, . 
\end{align*} 
The iteration~\eqref{eq:second-order-5} can be reformulated in terms of $u$:
\begin{equation*}
u(t + \Delta t, y) = a_n \sum_{\eta \in \Z^d} \omega(\eta) u(t, y - (\Delta y) \eta) +b_n u(t,y) - c_n u(t-\Delta t, y) \, .
\end{equation*}
Subtracting $u(t,y)$ and using $a_n + b_n - c_n = 1$, we obtain 
\begin{align*}
&u(t + \Delta t, y) - u(t,y)\\
&\qquad= a_n \sum_{\eta \in \Z^d} \omega(\eta) \left[ u(t, y - (\Delta y) \eta)- u(t,y) \right]  - c_n \left[u(t-\Delta t, y) - u(t,y) \right] \, .
\end{align*}
We make the Taylor expansions of $u$, but this time a second-order expansion in $t$ is necessary: 
\begin{align*}
u(t + \Delta t, y) - u(t,y) &= (\Delta t) \partial_t u + \frac{(\Delta t)^2}{2} \partial_{tt} u +  o(\Delta t) \, , \\
u(t - \Delta t, y) - u(t,y) &= - (\Delta t) \partial_t u + \frac{(\Delta t)^2}{2} \partial_{tt} u +  o(\Delta t) \, , \\
u(t, y - (\Delta y) \eta) - u(t,y) &= - (\Delta y) \left\langle \nabla_y u, \eta \right\rangle + \frac{(\Delta y)^2}{2} \left\langle \eta, \left(\nabla^2_y u\right)  \eta \right\rangle + o\left((\Delta y)^2\right) \, .
\end{align*}
We obtain 
\begin{align*}
\frac{(\Delta t)^2}{2} (1+c_n) \partial_{tt} u + (\Delta t) (1-c_n) \partial_t u = a_n \frac{(\Delta y)^2}{2} \nabla_y \cdot \left( Q \nabla_y u\right) \, . 
\end{align*}
To have the scaling to the Euler--Poisson--Darboux (EPD) equation, we take $\Delta t = \Delta y$, and
\begin{align}
\label{eq:scaling-coeffs}
&a_n \xrightarrow[n\to\infty]{} 2 \, , &&c_n = 1 - \frac{d+1}{n} + o\left(\frac{1}{n}\right) \, .
\end{align}
Indeed, as $t = n \Delta t$, we have $1 - c_n \sim \frac{d+1}{t}\Delta t$ and thus 
\begin{align*}
\frac{(\Delta t)^2}{2} (2 +o(1)) \partial_{tt} u + (\Delta t)^2 \left(\frac{d+1}{t} + o(1)\right)\partial_t u &= (2 + o(1)) \frac{(\Delta y)^2}{2} \nabla_y \cdot \left( Q \nabla_y u\right) \, , 
\end{align*}
thus by choosing the scaling $\Delta t = \Delta y$ and identifying highest-order terms, we obtain the EPD equation in the limit as $\Delta t, \Delta y \to 0$: 
\begin{align*}
\partial_{tt} u + \frac{d+1}{t} \partial_t u &= \nabla_y \cdot \left(Q\nabla_y u\right) \, .
\end{align*}
Note that there is the implicit condition $b_n \xrightarrow[]{} 0$ implied by~\eqref{eq:scaling-coeffs} as $a_n + b_n - c_n = 1$. 

\medskip\noindent
\textbf{Different scalings.} Note that in this section, the scaling is $\Delta t = \Delta y$ while for the simple gossip, the scaling is $\Delta t = (\Delta y)^2$. This is another illustration that the iteration of this section diffuses faster: to scale to a non-degenerate object, it needs go though a higher order rescaling in space. 

\subsection{Probabilistic interpretation}

For the sake of mathematical curiosity, let us make an aside on the probabilistic interpretations of the heat equation and of the EPD equation. It is well-known that the heat equation represents the evolution of the probability density function of Brownian motion in $\R^d$ \citep{legall2018}. 
Stochastic representation for wave equations remains largely an open problem \citep{dalang2008feynman,chatterjee2013stochastic}; however \citet{kac1974stochastic} has shown that in dimension $d=1$, the solutions of \eqref{eq:epd}
represent the evolution of a persistent random walk: $u(t,.)$ is the probability density function of a random walker in $\R$, that moves according to a speed $+1$ or $-1$, and, at a time-dependent Poisson rate $a(t) = 1/t$, reverses its speed. Thus, the EPD equation~\eqref{eq:epd} is the density of a persistent random walk that gets more and more persistent over time. The rate $a(t) = 1/t$ of the speed resampling is chosen so that the law of the random walker, when started from $0$, is uniform on the interval $[-t,t]$. Note that as the random walker has unit speed, it can not be at a distance larger than the elapsed time $t$ from the starting point. 

This probabilistic point of view gives further credence to the high-level idea that acceleration is achieved by giving inertia to the gossiped information. 

\subsection{Relation to the Jacobi polynomial iteration}
\label{sec:relation-jacobi}

For the convenience of the reader, we recall here the Jacobi polynomial iteration introduced by \citet{berthier2020accelerated} to accelerate gossip algorithms:
\begin{align}
&x_1 = a_0 \omega * x_0 + b_0 x_0 \, , &&x_{n+1} = a_n \omega *  x_n + b_n x_n - c_n x_{n-1} , \label{eq:jpi-1}\\
&a_0 = \frac{d+4}{2(2+d)} \, , &&b_0 = \frac{d}{2(2+d)} \, , \\
&a_n = \frac{(2n+d/2+1)(2n+d/2+2)}{2(n+1+d/2)^2} \, , &&b_n = \frac{d^2(2n+d/2+1)}{8(n+1+d/2)^2(2n+d/2)} \, , \\
&c_n = \frac{n^2(2n+d/2+2)}{(n+1+d/2)^2(2n+d/2)} \, , \qquad n \geq 1 \, . \label{eq:jpi-4}
\end{align} 
As explained in \citep{berthier2020accelerated}, this iteration is associated to the Jacobi polynomials $P_n^{(\alpha,\beta)}$ with parameters $\alpha = d/2$ and $\beta = 0$. It is of the form~\eqref{eq:second-order-5} with coefficients satisfying~\eqref{eq:scaling-coeffs}. Thus the Jacobi polynomial iteration scales to the EPD equation. However, note the large difference between the approaches of \citep{berthier2020accelerated} and this paper: in \citep{berthier2020accelerated}, the authors use the geometry of the graph to approximate the spectrum of the gossip problem and design a polynomial-based method adapted to this approximate spectrum; in this paper, we also use the geometry of the graph but to view gossip algorithms as PDEs when rescaled. It is remarkable that the two approaches lead to similar results. 

The PDE perspective enriches our understanding of the Jacobi polynomial iteration. For instance, one can explore the effect of using the Jacobi polynomial $P_n^{(\alpha,\beta)}$ for a different value than $(\alpha,\beta) = (d/2,0)$ used in the Jacobi polynomial iteration. The formula~\citep[Equation (SM6.2)]{berthier2020accelerated} gives the expression of the recurrence coefficients of the Jacobi polynomial iteration in this general case; it follows that
\begin{align*}
&a_n \xrightarrow[n\to\infty]{} 2 \, , &&c_n = 1 - \frac{2\alpha+1}{n} + o\left(\frac{1}{n}\right) \, ,
\end{align*}
thus, repeating the computations of Section~\ref{sec:second-order}, the iteration converges to the more general EPD equation~\eqref{eq:epd-general}. Consider its fundamental solution~\eqref{eq:fundamental-solution}. If $\alpha>d/2$, the mass concentrates at the center of the ball of radius $t$. On the contrary, if $\alpha < d/2$, the mass concentrates at the edge of the ball. Both effects are undesirable as uniform averaging is the optimal strategy. These effects are simulated in Figure~\ref{fig:scaling-alpha}.

\begin{figure}
	\begin{subfigure}{0.7\linewidth}
		\begin{center}
			\includegraphics[width = \linewidth]{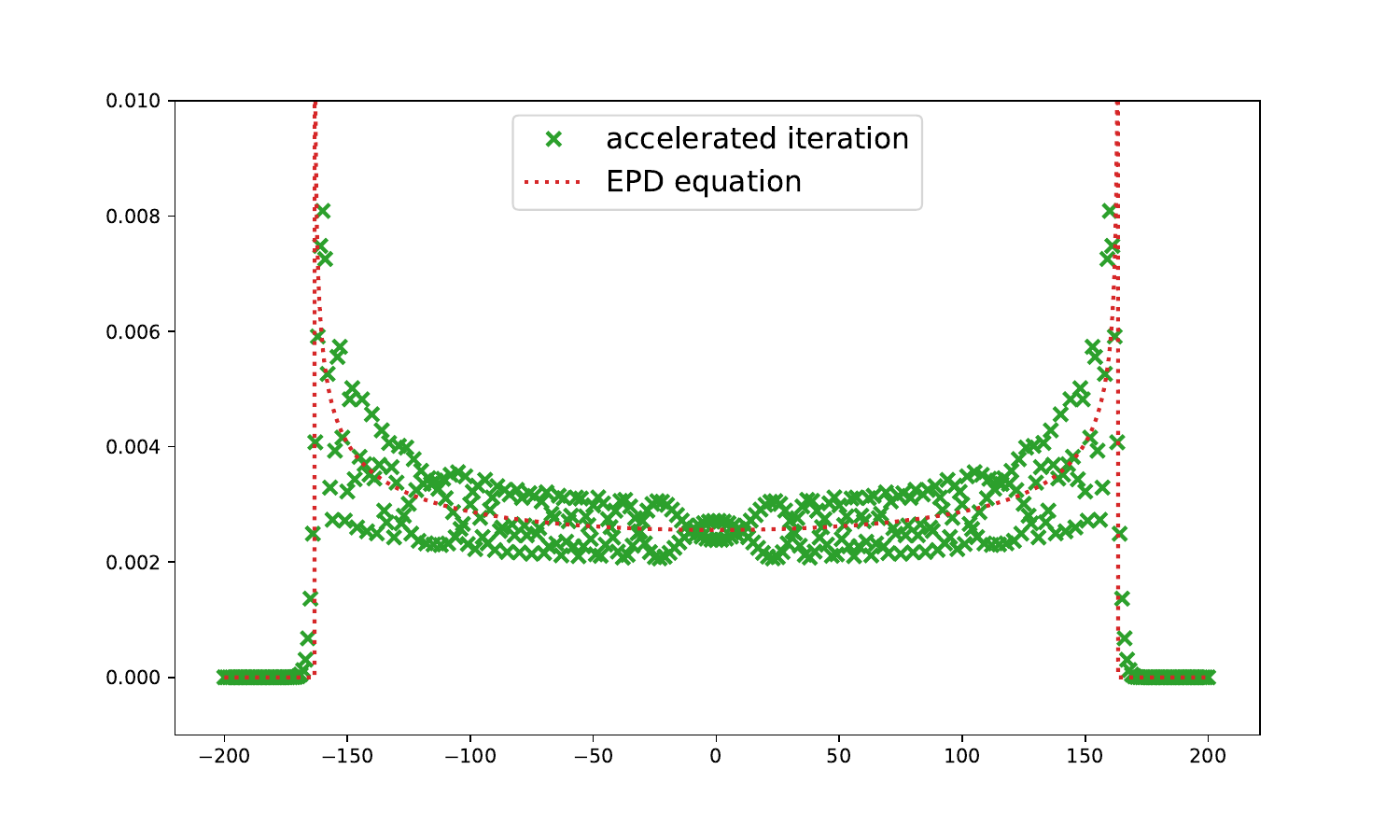}
		\end{center}
		\caption{$\alpha = \frac{1}{4} < \frac{1}{2} = \frac{d}{2}$}
	\end{subfigure}
	\begin{subfigure}{0.7\linewidth}
		\begin{center}
			\includegraphics[width = \linewidth]{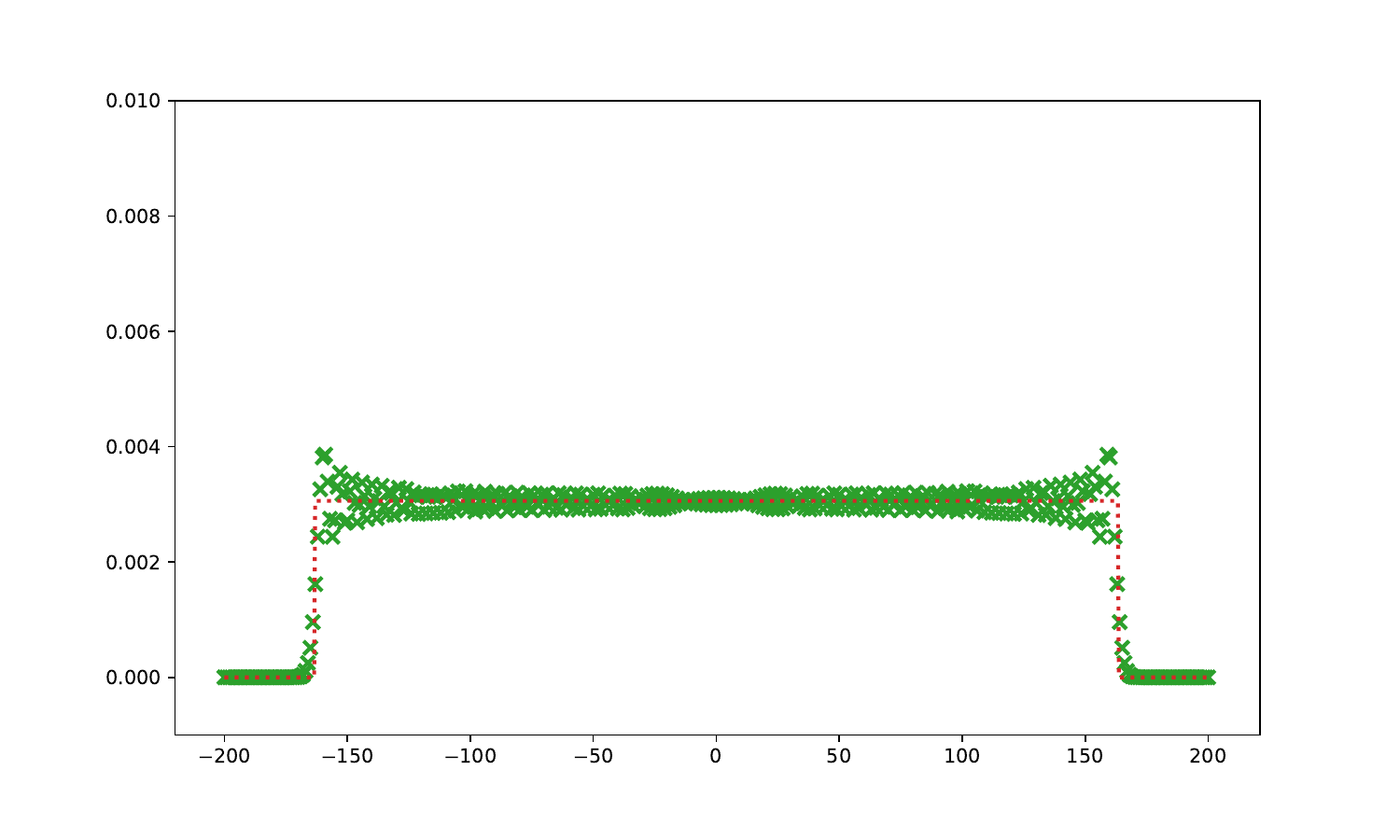}
		\end{center}
		\caption{$\alpha = \frac{1}{2} = \frac{d}{2}$}
	\end{subfigure}
	\begin{subfigure}{0.7\linewidth}
		\begin{center}
			\includegraphics[width = \linewidth]{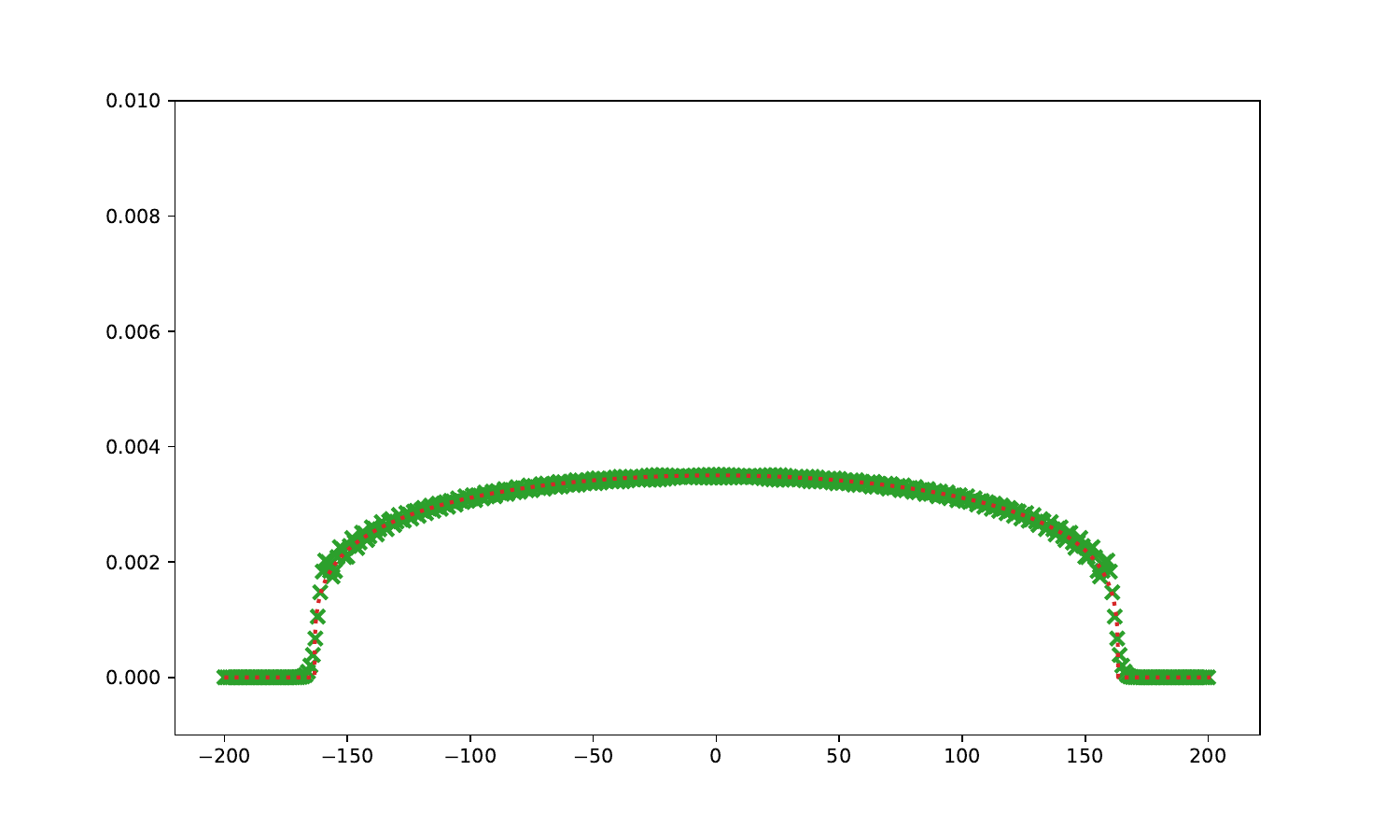}
		\end{center}
		\caption{$\alpha = \frac{3}{4} > \frac{1}{2} =  \frac{d}{2}$}
	\end{subfigure}
	\caption{Same simulation as in Figure~\ref{fig:scaling}(C), but we now study the effect of varying the parameter $\alpha$ of the Jacobi polynomial iteration. Varying $\alpha$ also changes the fundamental solution~\eqref{eq:fundamental-solution} of the EPD equation~\eqref{eq:epd-general}.}
	\label{fig:scaling-alpha}
\end{figure}

\subsection{Related numerical PDE methods and problems}

Both the simple gossip algorithm and Jacobi polynomial iteration can be interpreted as numerical discretizations of their limiting PDEs, which has a rich literature studying both theory and implementation \citep{celia1992numerical}. 
To our best knowledge, the only theoretical treatment of numerical discretization errors for the EPD equation is due to \cite{genis1984finite}, where the author studied a large class of finite element methods (FEMs). 
Here we note the local averaging operation used in gossip algorithms can be seen as a special case of FEM, where averaging is approximating the Laplacian operator. 

However, we emphasize two key differences. 
Firstly, FEM generally does not preserve total mass, which is a key property of gossip algorithms. 
In this sense, the Jacobi polynomial iteration is closer to a finite volume method (FVM) \citep{leveque2002finite}, which typically preserves a quantity of interest such as volume. 
Secondly, most numerical analysis approaches require some level of regularity for the initial condition, which cannot apply to the fundamental solution (with Dirac delta initial condition). 
In the case of \cite{genis1984finite}, all of the error bounds were in terms of an equivalent fractional Sobolev norm of the initial condition, which is unbounded for the Dirac delta. 
Instead, we were able to avoid regularity issues by studying the discretization error in the Fourier domain, see Section \ref{sec:rigorous} and Appendix \ref{ap:proofs-epd}. 

There is a body of empirical work on numerical methods for EPD and related equations (for example \cite{glowinski2013euler}), however these generally cannot be transformed to a locally communicating algorithm, which is important to build practical gossip algorithms. 
We also mention a Hamilton--Jacobi equation of the type $\partial_t u = \| \nabla u \|$, which has a similar fundamental solution of the form  $\bfone_{\left\{ \|y\| \leq t \right\}}$. 
It is well known this equation can be solved by the fast sweeping method \citep{zhao2005fast} and the fast marching method \citep{sethian1996fast}. 
However, also due to the lack of normalization (to preserve mass), and the fact these algorithms are not locally communicating, they are not good candidates for gossip algorithms.

\subsection{Open problems: other geometries, stochastic case}

An important limitation of this work is that we only study synchronous gossip on a regular lattice. It is natural to ask what could happen in an asynchronous setting, or when the graph is microscopically perturbed (percolation graph, random geometric graph, etc). 

For the simple gossip, or equivalently, for the random walk or for heat diffusion, answering this question is the subject of the field of homogenization, see, e.g., \citep{armstrong2019quantitative,armstrong2018elliptic,biskup2011recent}. The heuristic is that on a large scale and for long diffusion times, microscopic fluctuations of the connectivity (in space and in time) are homogenized: the process scales to a homogeneous diffusion with some constant effective diffusion matrix $Q$.

Our work raises the following question: is there homogenization for the EPD equation? \citet{berthier2020accelerated} prove that there is some robustness of the Jacobi polynomial iteration to microscopic details of the graphs, as the rates are the same on all graphs of spectral dimension $d$. However, we do not know if the process scales to the same limit on those graphs.

\section{Rigorous convergence results}
\label{sec:rigorous}

In this section, we provide rigorous justification to the heuristic derivations of Section~\ref{sec:heuristic}. In Section~\ref{sec:rigorous-simple}, we start with the convergence of simple gossip to the heat equation. This case is simple as it is equivalent to the central limit theorem: we obtain a weak convergence result. A stronger convergence result, of local type, is deduced from the local central limit theorem. 

Section~\ref{sec:rigorous-simple} illustrates that two types of convergence are possible: weak and local. In Section~\ref{sec:rigorous-jacobi}, we prove analog results for the convergence of the Jacobi polynomial iteration to the EPD equation. We restrict ourselves to the Jacobi polynomial iteration---and not to any method satisfying~\eqref{eq:scaling-coeffs}---for technical reasons: we use fine asymptotic properties of the Jacobi polynomials. However, we end this section with a remark on why we conjecture the same scaling for all iterations satisfying~\eqref{eq:scaling-coeffs}.

In Section~\ref{sec:sharp-rates}, we apply the local convergence result to obtain convergence rates of the Jacobi polynomial iteration. These rates are sharp up to constants.

\subsection{Simple gossip and the heat equation}
\label{sec:rigorous-simple}

Consider the simple gossip iteration 
\begin{align*}
&x_0 = \bfone_0 \, , &&x_{n+1} = \omega * x_n \, .
\end{align*}
The iteration $x_n$ can be interpreted as the probability density function of a random walk on $\Z^d$, initialized from $0$, with increments of law $\omega$. As $\omega$ is centered, the random walk is unbiased; the matrix $Q$ is the covariance of the increments. The asymptotic law $x_n$ is described by the central limit theorems: here, we interpret them with our notations. Let $u(t,y)$ denote the fundamental solution~\eqref{eq:solution-heat} of the heat equation~\eqref{eq:heat-equation}. We denote $\delta_y$ the Dirac mass at $y \in \R^d$.
\begin{thm}[Central Limit Theorem, see, e.g., \citep{billingsley2008probability}]
	\label{thm:clt}
	We have the following weak convergence in the space of positive measures: for any $t\geq0$,
	\begin{align*}
	\sum_{v \in \Z^d} x_{\lfloor t/\varepsilon^2\rfloor}(v) \delta_{\varepsilon v} \xrightarrow[\varepsilon \to 0]{} u(t,y)\diff y \, .
	\end{align*}
\end{thm}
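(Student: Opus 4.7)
The plan is to reduce the statement to the classical multidimensional central limit theorem by interpreting the iteration as the evolution of the law of a random walk. Let $(Z_i)_{i \geq 1}$ be i.i.d.~$\Z^d$-valued random variables with common law $\omega$, and set $S_0 = 0$, $S_n = Z_1 + \dots + Z_n$. Starting from $x_0 = \bfone_0$, an induction on the recursion $x_{n+1}(v) = \sum_\eta \omega(v-\eta) x_n(\eta)$ identifies $x_n(v)$ with $\P(S_n = v)$, because this is precisely the convolution identity for the law of a sum of independent variables. Consequently, the discrete measure
\begin{equation*}
\mu_\varepsilon \;:=\; \sum_{v \in \Z^d} x_{\lfloor t/\varepsilon^2 \rfloor}(v) \, \delta_{\varepsilon v}
\end{equation*}
is exactly the law of the rescaled random variable $\varepsilon S_{n_\varepsilon}$, where $n_\varepsilon := \lfloor t/\varepsilon^2 \rfloor$.

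Next, I would extract the moment information from the hypotheses: since $\omega$ is finitely supported, $Z_1$ has moments of every order; by centeredness $\E[Z_1] = 0$, and by definition $\Cov(Z_1) = Q$, which is assumed to have full rank. The multidimensional central limit theorem (see, e.g., \citep{billingsley2008probability}) then yields weak convergence $S_n/\sqrt{n} \Rightarrow \mathcal{N}(0, Q)$ as $n \to \infty$.

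The final step is a routine rescaling. Writing
\begin{equation*}
\varepsilon S_{n_\varepsilon} \;=\; \sqrt{n_\varepsilon \varepsilon^2} \cdot \frac{S_{n_\varepsilon}}{\sqrt{n_\varepsilon}} ,
\end{equation*}
the deterministic prefactor converges to $\sqrt{t}$ (since $t - \varepsilon^2 \leq n_\varepsilon \varepsilon^2 \leq t$), while $S_{n_\varepsilon}/\sqrt{n_\varepsilon} \Rightarrow \mathcal{N}(0,Q)$ by the previous step. Slutsky's theorem combines the two to give $\varepsilon S_{n_\varepsilon} \Rightarrow \mathcal{N}(0, tQ)$. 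Since this Gaussian has density $u(t,y)$ as in~\eqref{eq:solution-heat}, and weak convergence of laws of random variables is equivalent to weak convergence of the corresponding probability measures, the conclusion $\mu_\varepsilon \Rightarrow u(t,y)\diff y$ follows.

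There is essentially no hard step here: the entire content of the theorem is the identification of the iteration with a random walk and a careful bookkeeping of the scaling $n \sim t/\varepsilon^2$. If there is any subtlety at all, it lies in handling the floor function---resolved cleanly by Slutsky---and in justifying that the case $t = 0$ is trivial (both sides reduce to $\delta_0$). The more substantive challenges of the paper arise only in the local version (requiring the local CLT) and in the analogous results for the Jacobi polynomial iteration, where Fourier-analytic asymptotics replace the probabilistic CLT.
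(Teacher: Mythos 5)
Your proposal is correct and follows exactly the route the paper itself takes: the paper gives no separate proof of Theorem~\ref{thm:clt}, but instead observes that $x_n$ is the law of the random walk with centered increments $\omega$ and covariance $Q$, and then invokes the classical multidimensional CLT, which is precisely your argument with the bookkeeping (Slutsky for the floor function, identification of $\mathcal{N}(0,tQ)$ with the fundamental solution~\eqref{eq:solution-heat}) spelled out.
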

A stronger local result holds assuming that $\omega$ is aperiodic, i.e., that the random walk with increments $\omega$ is an aperiodic Markov chain on $\Z^d$ \citep[Section~8]{billingsley2008probability}. For instance, the vector $\omega$ of Equation \eqref{eq:triangular}, corresponding to the triangular lattice, is aperiodic, while the vector $\omega$ of Equation \eqref{eq:standard-diffusion-vector}, corresponding to the regular grid, is not.

\begin{thm}[Local Central Limit Theorem, \citep{gnedenko1948local}]
	\label{thm:local-clt}
	Assume that $\omega$ is aperiodic. Then
	\begin{align*}
	\sup_{v\in\Z^d} \left\vert x_n(v) - u(n,v)\right\vert = o\left(\frac{1}{n^{d/2}}\right) \qquad \text{as }n \to \infty. 
	\end{align*}
\end{thm}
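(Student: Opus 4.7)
The plan is to use the classical Fourier-analytic approach to the local central limit theorem, which in our setting amounts to comparing two Fourier inversion integrals. First I observe that $x_n = \omega^{*n}$ is the probability mass function of $S_n = Y_1 + \cdots + Y_n$ with $Y_i$ i.i.d.\ of law $\omega$, and by Fourier inversion on the torus $[-\pi,\pi]^d$,
$$x_n(v) = \frac{1}{(2\pi)^d}\int_{[-\pi,\pi]^d} \hat{\omega}(\xi)^n e^{-i\langle \xi, v\rangle}\,\diff\xi,
\qquad \hat{\omega}(\xi) = \sum_{w\in\Z^d}\omega(w)e^{i\langle \xi, w\rangle}.$$
The Gaussian $u(n,\cdot)$ of~\eqref{eq:solution-heat} has Fourier transform $e^{-n\langle\xi,Q\xi\rangle/2}$, so
$$u(n, v) = \frac{1}{(2\pi)^d}\int_{\R^d} e^{-n\langle\xi,Q\xi\rangle/2} e^{-i\langle\xi, v\rangle}\,\diff\xi.$$

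After rescaling $\xi = \eta/\sqrt{n}$ in the first integral, subtracting, and taking absolute values inside the integral (which removes the $v$-dependence), it suffices to prove
$$\int_{\R^d}\Bigl| \hat{\omega}(\eta/\sqrt n)^n \,\mathbf{1}_{\|\eta\|_\infty \leq \pi\sqrt n} - e^{-\langle\eta,Q\eta\rangle/2}\Bigr|\,\diff\eta \xrightarrow[n\to\infty]{} 0,$$
since then $\sup_v|x_n(v) - u(n,v)| = o(n^{-d/2})$ as required. I would split the integration into a neighbourhood of the origin and its complement.

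On the near-origin piece, I would expand $\hat{\omega}$ around $\xi = 0$. The assumption that $\omega$ is centered and that $Q$ has full rank gives $\hat{\omega}(\xi) = 1 - \tfrac{1}{2}\langle\xi,Q\xi\rangle + O(\|\xi\|^3)$, and finite support of $\omega$ makes $\hat{\omega}$ smooth, so there exist $c > 0$ and $\delta > 0$ with $|\hat{\omega}(\xi)| \leq \exp(-c\|\xi\|^2)$ for $\|\xi\| \leq \delta$. Raising to the $n$-th power after rescaling gives both the pointwise convergence $\hat{\omega}(\eta/\sqrt n)^n \to e^{-\langle\eta,Q\eta\rangle/2}$ and the integrable dominating function $e^{-c\|\eta\|^2} + e^{-\langle\eta,Q\eta\rangle/2}$, so dominated convergence handles this piece. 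Away from the origin, aperiodicity is the crucial ingredient: it implies $|\hat{\omega}(\xi)| < 1$ on $[-\pi,\pi]^d\setminus\{0\}$, hence by compactness $\sup_{\|\xi\|_\infty \in [\delta,\pi]}|\hat{\omega}(\xi)| \leq \rho$ for some $\rho < 1$. The resulting contribution is at most $(2\pi\sqrt n)^d \rho^n$, which decays superpolynomially, plus a Gaussian tail that is also negligible.

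The main obstacle is upgrading this to the little-$o$ conclusion. A pointwise convergence plus integrable bound only gives $o(1)$ of the integral, which is exactly what we need, so in fact the delicate part is ensuring the Taylor remainder is controlled uniformly enough to keep the dominating function integrable on all of $\R^d$ after rescaling. Concretely, one needs an estimate of the form $|\hat{\omega}(\eta/\sqrt n)^n - e^{-\langle\eta,Q\eta\rangle/2}| \leq C(\|\eta\|^3/\sqrt n) e^{-c\|\eta\|^2}$ on $\|\eta\| \leq \delta\sqrt n$, which is the standard sharpening of the Taylor expansion used in Gnedenko's original argument. Combining this near-origin vanishing with the exponential far-field decay from aperiodicity yields the desired uniform estimate.
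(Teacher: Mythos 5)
Your argument is the classical Fourier-inversion proof of the local CLT, and it is correct: the paper itself gives no proof of this theorem, citing \citet{gnedenko1948local} and \citet{curien2020random}, whose argument is exactly the one you reproduce (sup norm bounded by the $L^1$ norm of the difference of characteristic functions, diffusive rescaling $\xi=\eta/\sqrt n$, dominated convergence near the origin, and the aperiodicity bound $\sup_{\|\xi\|\geq\delta}|\hat\omega(\xi)|<1$ away from it --- the quantitative form of which is the paper's own Lemma~\ref{lem:curien}). It also mirrors the strategy the authors use in Appendix~\ref{ap:proofs-epd} for the EPD analogues; note only that the quantitative Taylor-remainder estimate you worry about at the end is not actually needed for the $o(n^{-d/2})$ conclusion, since pointwise convergence plus an integrable dominating function already makes the rescaled integral $o(1)$.
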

A pedagogical introduction to the local central limit theorem is provided by \citet{curien2020random}. The beauty of the local central limit theorem is that no rescaling is required: we simply discretize the heat equation in time and space. 

\subsection{The Jacobi polynomial iteration and the Euler--Poisson--Darboux equation}
\label{sec:rigorous-jacobi}

We now give analogs of Theorems~\ref{thm:clt} and~\ref{thm:local-clt} for the convergence of the Jacobi polynomial iteration to the EPD equation. Let $x_n$ denote the iterates of the Jacobi polynomial iteration~\eqref{eq:jpi-1}-\eqref{eq:jpi-4} initialized from $x_0 = \bfone_0$ and $u(t,y)$ the fundamental solution~\eqref{eq:fundamental-solution-d/2} of the EPD equation~\eqref{eq:epd}. 

\medskip\noindent
\textbf{Assumptions.} In this section, we assume that $\omega$ is symmetric ($\omega(-v) = \omega(v)$) and aperiodic. While the aperiodicity assumption is clearly necessary for Theorem~\ref{thm:local-jpi} to hold, we do not know if these assumptions are necessary otherwise.

\begin{thm}[Weak Convergence]
	\label{thm:weak-epd}
	We have the following weak convergence in the space of signed measures: for all $t > 0$,
	\begin{equation*}
	\sum_{v\in\Z^d} x_{\lfloor t/\varepsilon \rfloor}(v) \delta_{\varepsilon v} \xrightarrow[\varepsilon \to 0]{} u(t,y) \diff y \, .
	\end{equation*}
\end{thm}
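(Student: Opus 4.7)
My plan is to pass to the Fourier domain, identify the characteristic function $\hat x_n$ as a Jacobi polynomial evaluated at $\hat\omega$, and apply the classical Mehler--Heine asymptotic. Writing $\mu_\varepsilon := \sum_{v \in \Z^d} x_{\lfloor t/\varepsilon\rfloor}(v)\, \delta_{\varepsilon v}$ and testing against a Schwartz function $\phi$, Parseval's formula gives
\[
\int_{\R^d} \phi\, \diff \mu_\varepsilon \;=\; \frac{1}{(2\pi)^d}\int_{\R^d} \hat\phi(\xi)\, \hat x_{\lfloor t/\varepsilon\rfloor}(\varepsilon\xi)\, \diff\xi,
\]
so the whole problem reduces to a pointwise limit $\hat x_n(\varepsilon\xi) \to \hat u(t,\xi)$ together with a uniform bound enabling dominated convergence.

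First I would carry out the Fourier--Jacobi identification. Taking the discrete Fourier transform of the recurrence \eqref{eq:jpi-1} term by term yields $\hat x_{n+1}(\xi) = (a_n \hat\omega(\xi) + b_n)\, \hat x_n(\xi) - c_n\, \hat x_{n-1}(\xi)$ with $\hat x_0 = 1$. Hence $\hat x_n(\xi) = \tilde P_n(\hat\omega(\xi))$ for the unique polynomial of degree $n$ satisfying this three-term recurrence, and matching the coefficient formulas \eqref{eq:jpi-1}--\eqref{eq:jpi-4} (together with $a_n + b_n - c_n = 1$, which forces $\tilde P_n(1) = 1$) identifies $\tilde P_n$ with the normalized Jacobi polynomial $P_n^{(d/2,0)}/P_n^{(d/2,0)}(1)$. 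Because $\omega$ is a symmetric probability distribution with finite support, $\hat\omega$ is real on $\R^d$, takes values in $[-1,1]$, and satisfies $\hat\omega(\eta) = 1 - \tfrac{1}{2}\langle\eta, Q\eta\rangle + O(|\eta|^4)$ near the origin. The Mehler--Heine asymptotic then states that $n^{-d/2} P_n^{(d/2,0)}(\cos(z/n))$ converges, uniformly on compact subsets of $\mathbb{C}$, to $(z/2)^{-d/2} J_{d/2}(z)$. Combined with $P_n^{(d/2,0)}(1) \sim n^{d/2}/\Gamma(d/2+1)$ and applied at $z_\varepsilon := n\arccos(\hat\omega(\varepsilon\xi))$---which converges to $t\sqrt{\langle\xi, Q\xi\rangle}$ since the quartic Taylor remainder of $\hat\omega$ is negligible compared to the leading quadratic term---this yields
\[
\hat x_n(\varepsilon\xi) \;\longrightarrow\; \Gamma(d/2+1)\bigl(t\sqrt{\langle\xi,Q\xi\rangle}/2\bigr)^{-d/2} J_{d/2}\bigl(t\sqrt{\langle\xi,Q\xi\rangle}\bigr),
\]
which is precisely the Fourier transform of the fundamental solution~\eqref{eq:fundamental-solution-d/2}.

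For the uniform bound I would invoke the classical fact that for $\alpha \geq \beta > -1$ with $\alpha \geq -1/2$, the sup-norm of $P_n^{(\alpha,\beta)}$ on $[-1,1]$ is attained at $x=1$; applied with $(\alpha,\beta)=(d/2,0)$ this gives $|\tilde P_n| \leq 1$ on $[-1,1]$ and hence $|\hat x_n(\varepsilon\xi)| \leq 1$ for every $\varepsilon, \xi$. Dominated convergence with dominating function $|\hat\phi| \in L^1$ then closes the argument for Schwartz test functions, which is enough for the stated weak convergence. The main obstacle is the Mehler--Heine step: one must argue that the uniform-on-compacts convergence survives when the fixed argument is replaced by the $\varepsilon$-dependent quantity $z_\varepsilon$, and verify that the $O(\varepsilon^4)$ discrepancy between $\hat\omega(\varepsilon\xi)$ and $\cos(\varepsilon\sqrt{\langle\xi,Q\xi\rangle})$ does not perturb the limit once multiplied by the derivative of $\tilde P_n$ near~$1$. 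Beyond this delicate but standard Taylor control, every other step---the Fourier--Jacobi dictionary, the sup-norm bound, and the Parseval pairing---is routine.
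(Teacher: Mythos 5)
Your Fourier--Jacobi identification, the Mehler--Heine pointwise limit $\hat x_{\lfloor t/\varepsilon\rfloor}(\varepsilon\xi)\to\hat u(t,\xi)$, and the sup-norm bound $\vert\pi_n^{(d/2,0)}\vert\leq 1$ on $[-1,1]$ are all correct, and they do yield, via dominated convergence, that $\int\phi\,\diff\mu_{t,\varepsilon}\to\int\phi\, u(t,y)\,\diff y$ for every Schwartz test function $\phi$. The gap is in the closing claim that this ``is enough for the stated weak convergence.'' What you have proved is convergence in the sense of tempered distributions; to upgrade it to weak convergence of \emph{signed} measures (i.e., against bounded continuous test functions) you must approximate a general test function $f$ by a Schwartz function $\phi$ in sup norm, and the resulting error is controlled by $\Vert f-\phi\Vert_\infty\,\Vert\mu_{t,\varepsilon}\Vert_{TV}$. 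Since the iterates $x_n$ are genuinely signed (the coefficient $c_n$ subtracts mass), the total variation $\Vert\mu_{t,\varepsilon}\Vert_{TV}=\Vert x_{\lfloor t/\varepsilon\rfloor}\Vert_{\ell^1(\Z^d)}$ is not automatically $1$, and your bound $\Vert\hat x_n\Vert_{L^\infty}\leq 1$ does not control it: via Plancherel it only gives $\Vert x_n\Vert_{\ell^2}\leq 1$, hence by Cauchy--Schwarz against the $O(n^d)$ support size $\Vert x_n\Vert_{\ell^1}=O(n^{d/2})$, which diverges as $\varepsilon\to 0$.

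The paper's proof invokes a signed-measure variant of L\'evy's continuity theorem whose hypotheses are precisely tightness, uniform boundedness in total variation, and pointwise convergence of the Fourier transforms; the bulk of the work goes into the total-variation bound that your argument skips. That bound uses the aperiodicity hypothesis through the estimate $\vert\hat{\omega}(\xi)\vert\leq 1-\lambda\Vert\xi\Vert^2$ (Lemma~\ref{lem:curien}) together with the interior decay bound $\vert\pi_n^{(d/2,0)}(\lambda)\vert\leq C(\arccos\vert\lambda\vert)^{-d/2-1/2}n^{-d/2-1/2}$ of Proposition~\ref{prop:jacobi}, which combine to give the sharp estimate $\Vert x_n\Vert_{\ell^2}^2=O(n^{-d})$; only then does Cauchy--Schwarz yield $\Vert x_n\Vert_{\ell^1}=O(1)$. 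It is a warning sign that aperiodicity plays no role anywhere in your argument. You need to supply this quantitative $\ell^2$ decay, or an equivalent uniform total-variation bound, to close the proof; the pointwise Fourier step you did carry out coincides with the paper's.
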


\begin{thm}[Local Convergence]
\label{thm:local-jpi}
Let $\psi(x) := \prod_{i=1}^d \frac{\sin(\pi x_i)}{ \pi x_i }$. Then we have that 
\begin{align*}
\sum_{v \in \Z^d} 
\left( x_n(v) - (u(n, \cdot ) * \psi)(v) \right)^2 = o\left(\frac{1}{n^d}\right) \,, 
\qquad \text{as }n \to \infty. 
\end{align*}
\end{thm}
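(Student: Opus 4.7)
The plan is to pass to the Fourier side, where both $x_n$ and the sampled function $(u(n,\cdot)*\psi)|_{\Z^d}$ admit clean expressions, and reduce the $\ell^2(\Z^d)$ error to an $L^2$ integral on the torus $[-\pi,\pi]^d$ via Plancherel. Write $\hat f(\xi) = \sum_{v \in \Z^d} f(v) e^{-i\langle \xi,v\rangle}$ for the torus Fourier transform and $\tilde g(\xi) = \int_{\R^d} g(y) e^{-i\langle \xi,y\rangle}\,\diff y$ for the transform on $\R^d$. Fourier-transforming the recursion \eqref{eq:jpi-1} with $x_0 = \bfone_0$ yields $\hat x_n(\xi) = P_n(\hat\omega(\xi))$, where $P_n$ obeys the same recurrence as a scalar polynomial with $P_0 = 1$; by the choice of coefficients \eqref{eq:jpi-1}--\eqref{eq:jpi-4}, this $P_n$ is the Jacobi polynomial $P_n^{(d/2,0)}$ renormalized so that $P_n(1) = 1$. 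Fourier-transforming the EPD equation \eqref{eq:epd} in $y$ gives an ODE in $t$ whose solution with $\tilde u(0,\xi) = 1$, $\partial_t \tilde u(0,\xi) = 0$ is
\begin{align*}
\tilde u(t,\xi) = \phi\bigl(t\sqrt{\langle\xi,Q\xi\rangle}\bigr), \qquad \phi(s) := \Gamma(d/2+1)(s/2)^{-d/2} J_{d/2}(s),
\end{align*}
a normalized Bessel function. The function $\psi$ is engineered precisely so that $\tilde\psi = \bfone_{[-\pi,\pi]^d}$: by Poisson summation, convolving with $\psi$ band-limits $u$ before sampling, so the Fourier coefficients of $(u(n,\cdot)*\psi)|_{\Z^d}$ on the torus equal $\tilde u(n,\cdot)$ exactly (no aliasing). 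Parseval's identity then reduces the claim to
\begin{align*}
\int_{[-\pi,\pi]^d}\bigl| P_n(\hat\omega(\xi)) - \phi\bigl(n\sqrt{\langle\xi,Q\xi\rangle}\bigr)\bigr|^2 \diff \xi = o(n^{-d}).
\end{align*}

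After the substitution $\xi = \eta/n$ this is equivalent to showing $\int_{[-n\pi,n\pi]^d}\bigl|P_n(\hat\omega(\eta/n)) - \phi(\sqrt{\langle\eta,Q\eta\rangle})\bigr|^2 \diff \eta \to 0$. Since $\omega$ is symmetric and centered with covariance $Q$, one has $\hat\omega(\eta/n) = 1 - \frac{1}{2n^2}\langle\eta,Q\eta\rangle + O(|\eta|^4/n^4)$, so for fixed $\eta$ we can write $\hat\omega(\eta/n) = \cos(z_n/n)$ with $z_n \to z := \sqrt{\langle\eta,Q\eta\rangle}$, and the Mehler--Heine asymptotic $\lim_{n\to\infty} P_n^{(d/2,0)}(\cos(z/n))/P_n^{(d/2,0)}(1) = \phi(z)$ then gives pointwise convergence of the integrand to $0$.

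The main obstacle is upgrading this pointwise convergence to $L^2$ convergence of the full integral. I would partition $[-n\pi,n\pi]^d$ into three regions. \emph{(i)} On a fixed ball $|\eta|\leq R$, a uniform version of Mehler--Heine together with boundedness of $\phi$ yields dominated convergence. \emph{(ii)} On an intermediate annulus $R\leq |\eta|\leq \delta n$, both terms sit in the Bessel asymptotic regime $J_{d/2}(s) = O(s^{-1/2})$, so each is bounded by $C|\eta|^{-(d+1)/2}$; since $d+1>d$, this envelope has a square-integrable tail on $\R^d$, and the contribution of the annulus can be made arbitrarily small uniformly in $n$ by taking $R$ large. \emph{(iii)} On the outer region $|\eta|\geq \delta n$, equivalently $\xi$ bounded away from $0$ on the torus, the aperiodicity hypothesis gives $|\hat\omega(\xi)|\leq \rho<1$, and the standard uniform bound $|P_n^{(d/2,0)}(s)/P_n^{(d/2,0)}(1)| = O(n^{-(d+1)/2})$ on $[-\rho,\rho]$ makes the $P_n$ term vanish at the required rate, while the $\phi$ term is again controlled by Bessel decay. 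Stitching (i) and (ii) together is the delicate step, as it requires a uniform Mehler--Heine estimate with an explicit error term in $|\eta|$; sharp asymptotics of Jacobi polynomials from Szeg\H{o}'s monograph (or its descendants) should supply this.
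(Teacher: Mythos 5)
Your proposal follows essentially the same route as the paper: Plancherel's identity, the sinc band-limiting observation that the torus Fourier transform of the sampled $u(n,\cdot)*\psi$ is exactly $\hat u(n,\xi)$, the rescaling $\xi = \eta/n$, Mehler--Heine for pointwise convergence of the integrand, and Szeg\H{o}'s uniform Jacobi bounds combined with the aperiodicity estimate $\vert\hat\omega(\xi)\vert \leq 1-\lambda\Vert\xi\Vert^2$ to control the tails. The ``delicate stitching'' between your regions (i) and (ii) that you flag as the main obstacle is in fact a non-issue in the paper's version: it applies dominated convergence with a single $n$-independent integrable envelope (constant near the origin, $C\Vert\zeta\Vert^{-d-1}$ for $\Vert\zeta\Vert\geq 1/\sqrt{\lambda}$, integrable since $d+1>d$), so only pointwise Mehler--Heine convergence is needed and no quantitative error term in $\vert\eta\vert$ is required.
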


% \bill{There is a result that might help us prove $L^\infty$ convergence: we can define the equivalent Sobolev norm (eqv in the sense the two norms bound each other by a constant)
% % 
% \begin{equation*}
%     \| u \|_{H^s(\mathbb{R}^d)}^2 
%     := \int_{\mathbb{R}^d} (1 + |k|^2)^s |\hat{u}(k)|^2 \, dk \,, 
% \end{equation*}
% % 
% and if $s > \frac{d}{2}$, we get a version of Morrey's inequality 
% % 
% \begin{equation*}
%     \|u\|_{L^\infty(\mathbb{R}^d)} \leq C(d,s) \, \|u\|_{H^s(\mathbb{R}^d)} \,. 
% \end{equation*}
% }

% In words, the local convergence holds, provided that we convolve the solution $u(n,.)$ of the EPD equation with the filter $\psi$. 
The two theorems are proved in Appendix~\ref{ap:proofs-epd}. 

\begin{remark}
    We believe that Theorem \ref{thm:local-jpi} should hold without the convolution with $\psi$, namely, 
	$\sum_{v \in \Z^d} \left( x_n(v) - \left(u(n,v) \right)(v)\right)^2 = o\left(\frac{1}{n^d}\right)$, 
	especially in light of Theorem \ref{thm:weak-epd}. 
	However, there are some technical challenges with the proof that we are unable to resolve.
\end{remark}

\begin{remark}
	The statements of Theorems~\ref{thm:weak-epd} and~\ref{thm:local-jpi} and their proofs can be easily adapted to study the Jacobi polynomial iterations for other parameters $(\alpha,\beta) \neq (d/2,0)$, as long as $\alpha > d/2-1/2$ and $\beta \leq \alpha$. In this case, the limiting PDE depends on $\alpha$. We have the convergence to the fundamental solution~\eqref{eq:fundamental-solution} of the general EPD equation~\eqref{eq:epd-general}. 
\end{remark}

\begin{remark}[Extension beyond the Jacobi polynomial iteration]
	Our theorems are stated for the Jacobi polynomial iteration only for a technical reason: the proofs are based on well-known asymptotic properties of the Jacobi polynomials, stated in Proposition~\ref{prop:jacobi}. We conjecture that all other sequences of polynomials with recursion coefficients satisfying~\eqref{eq:scaling-coeffs} also satisfy the same properties: this would prove the scaling to the EPD equation for all second-order gossip algorithms satisfying~\eqref{eq:scaling-coeffs}. 
	
	This conjecture is supported by \citet{aptekarev1993asymptotics}: he shows that a sequence of orthogonal polynomial must satisfy the Mehler--Heine asymptotics (Proposition~\ref{prop:jacobi}.(\ref{it:mehler-heine})) provided that the recurrence coefficients of the polynomials satisfy some conditions that resemble~\eqref{eq:scaling-coeffs}. Interestingly, he explains that the asymptotics of the recurrence coefficients are related to the shape of the orthogonality measure of the associated orthogonal polynomials near $1$: this links the approaches of \citep{berthier2020accelerated} and this paper. 
\end{remark}

\subsection{Application: sharp rates of the Jacobi polynomial iteration on $\Z^d$}
\label{sec:sharp-rates}

In this section, we apply Theorem~\ref{thm:local-jpi} to obtain sharp rates for the Jacobi polynomial iteration. 

\begin{coro}
	\label{coro:sharp-rates}
	Assume that $\omega$ is symmetric and aperiodic. Let $x_n$ be the iterates of the Jacobi polynomial iteration~\eqref{eq:jpi-1}-\eqref{eq:jpi-4}, initialized at $x_0 = \bfone_0$. Then we have the asymptotic equivalence
	\begin{align*}
	\sum_{v \in \Z^d} x_n(v)^2 \underset{n\to\infty}{\sim} \frac{1}{(\det Q)^{1/2} |B(0,1)|} \frac{1}{n^d} \, ,
	\end{align*}
	where $|B(0,1)| = \frac{\pi^{d/2}}{\Gamma\left(d/2+1\right)}$ is the volume of the Euclidean unit ball in dimension $d$.
\end{coro}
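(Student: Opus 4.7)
The plan is to control $\sum_v x_n(v)^2$ by writing $x_n(v) = (u(n,\cdot)*\psi)(v) + r_n(v)$, where Theorem~\ref{thm:local-jpi} gives $\sum_v r_n(v)^2 = o(1/n^d)$. Expanding the square and applying Cauchy--Schwarz to the cross term, it is enough to prove that
\begin{equation*}
\sum_{v \in \Z^d}\bigl((u(n,\cdot)*\psi)(v)\bigr)^2 \;\underset{n\to\infty}{\sim}\; \frac{1}{(\det Q)^{1/2}|B(0,1)|}\,\frac{1}{n^d}\,,
\end{equation*}
since then the cross term is $O(n^{-d/2})\cdot o(n^{-d/2}) = o(n^{-d})$ and the $r_n$ term is itself $o(n^{-d})$.

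The key reduction is a Shannon sampling identity. With the Fourier convention $\hat f(\xi)=\int f(y)e^{-i\langle\xi,y\rangle}\diff y$, one has $\hat\psi = \bfone_{[-\pi,\pi]^d}$. Hence $g := u(n,\cdot)*\psi$ is band-limited with $\hat g = \hat u(n,\cdot)\,\bfone_{[-\pi,\pi]^d}$. Applying Poisson summation to $|g|^2$ (or equivalently Parseval on the torus for the samples $(g(v))_{v\in\Z^d}$) gives
\begin{equation*}
\sum_{v\in\Z^d} g(v)^2 \;=\; \frac{1}{(2\pi)^d}\int_{[-\pi,\pi]^d}|\hat u(n,\xi)|^2\,\diff\xi \;=\; \int_{\R^d} u(n,y)^2\,\diff y \;-\; \Delta_n\,,
\end{equation*}
where $\Delta_n := (2\pi)^{-d}\int_{\R^d\setminus[-\pi,\pi]^d}|\hat u(n,\xi)|^2\,\diff\xi$ is the high-frequency correction.

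The main-order integral is a direct computation from the explicit fundamental solution~\eqref{eq:fundamental-solution-d/2}: squaring and integrating the indicator of the ellipsoid $\{\langle y,Q^{-1}y\rangle\leq n^2\}$, whose volume is $(\det Q)^{1/2}|B(0,1)|\,n^d$, and using $|B(0,1)|=\pi^{d/2}/\Gamma(d/2+1)$, one gets $\int u(n,y)^2\diff y = \bigl((\det Q)^{1/2}|B(0,1)|\bigr)^{-1}n^{-d}$, which is exactly the target constant.

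For the correction $\Delta_n$, the scaling $u(n,y)=c\,n^{-d}\bfone_{E_1}(y/n)$ with $E_1=\{\langle z,Q^{-1}z\rangle\leq 1\}$ gives $\hat u(n,\xi)=c\,\widehat{\bfone_{E_1}}(n\xi)$, so the substitution $\eta=n\xi$ yields $\Delta_n = c^2\,n^{-d}\int_{\R^d\setminus n[-\pi,\pi]^d}|\widehat{\bfone_{E_1}}(\eta)|^2\diff\eta$. Since $\bfone_{E_1}\in L^2(\R^d)$, Plancherel puts $\widehat{\bfone_{E_1}}\in L^2$, and the domain $\R^d\setminus n[-\pi,\pi]^d$ shrinks to the empty set; dominated convergence forces $\Delta_n = o(n^{-d})$, which closes the argument. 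The main obstacle is the appearance of $\psi$ in Theorem~\ref{thm:local-jpi}: one might worry that smoothing by $\psi$ perturbs the constant, but the band-limited structure of $\psi$ is precisely what makes the sampling identity exact, and the fact that $u(n,\cdot)$ is essentially supported at scale $n$ (so its Fourier content concentrates near the origin) is what makes the resulting high-frequency loss negligible compared to $n^{-d}$.
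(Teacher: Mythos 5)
Your proposal is correct and follows essentially the same route as the paper: reduce to $\Vert u(n,\cdot)*\psi\Vert_{\ell^2}^2$ via Theorem~\ref{thm:local-jpi}, use the periodization/sampling identity to write this as $(2\pi)^{-d}\int_{[-\pi,\pi]^d}\hat{u}(n,\xi)^2\,\diff\xi$, and exploit the scale invariance $\hat{u}(n,\zeta/n)=\hat{u}(1,\zeta)$ together with Plancherel and the explicit indicator formula to extract the constant. The only cosmetic difference is that you extend the frequency integral to $\R^d$ first and control the tail $\Delta_n$ afterwards, whereas the paper rescales first and lets the domain $[-n\pi,n\pi]^d$ exhaust $\R^d$; these are the same estimate.
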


We compare with \citet[Theorem SM7.1]{berthier2020accelerated}. Here our theorem applies only to regular lattices, while the previous result applies to all graphs of spectral dimension $d$; but we obtain an asymptotic equivalent, while the previous result gave only the exponent in $n$. In Figure~\ref{fig:sharp-rates}, we compare the two asymptotic equivalent quantities in the case of the Jacobi polynomial iteration on the triangular lattice. Note that similarly, one could obtain sharp rates for simple gossip from the local central limit Theorem~\ref{thm:local-clt}. 

\begin{figure}
	\includegraphics[width = 0.6\linewidth]{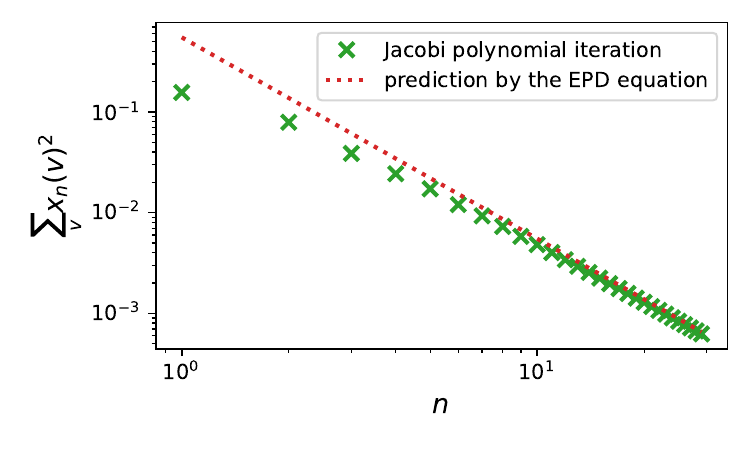}
	\caption{Comparison between the empirical error $\sum_{v\in\Z^d} x_n(v)^2$ and the asymptotic rate predicted by Corollary~\ref{coro:sharp-rates}. Here, $x_n$ are the iterates of the Jacobi polynomial iteration on the triangular lattice~\eqref{eq:triangular}. Note that the corollary predicts sharply not only the scaling in $n$ (the asymptotic slope in the logarithmic plot) but also the leading constant (the intercept of the asymptotic line). }
	\label{fig:sharp-rates}
\end{figure}

\section*{Acknowledgements}

We would like to thank the Saint-Flour Probability Summer School for the opportunity to meet and initiate this project. 
We would also like to thank Patric Bonnier, Christina Christara, Dejan Slep\v{c}ev for insightful discussions. 
ML is supported by the Ontario Graduate Scholarship and the Vector Institute. 

\bibliographystyle{abbrvnat}
\bibliography{bibliography}

\begin{thebibliography}{36}
\providecommand{\natexlab}[1]{#1}
\providecommand{\url}[1]{\texttt{#1}}
\expandafter\ifx\csname urlstyle\endcsname\relax
  \providecommand{\doi}[1]{doi: #1}\else
  \providecommand{\doi}{doi: \begingroup \urlstyle{rm}\Url}\fi

\bibitem[Aptekarev(1993)]{aptekarev1993asymptotics}
A.~I. Aptekarev.
\newblock Asymptotics of orthogonal polynomials in a neighborhood of the endpoints of the interval of orthogonality.
\newblock \emph{Sbornik: Mathematics}, 76\penalty0 (1):\penalty0 35, 1993.

\bibitem[Armstrong and Dario(2018)]{armstrong2018elliptic}
S.~Armstrong and P.~Dario.
\newblock Elliptic regularity and quantitative homogenization on percolation clusters.
\newblock \emph{Communications on Pure and Applied Mathematics}, 71\penalty0 (9):\penalty0 1717--1849, 2018.

\bibitem[Armstrong et~al.(2019)Armstrong, Kuusi, and Mourrat]{armstrong2019quantitative}
S.~Armstrong, T.~Kuusi, and J.-C. Mourrat.
\newblock \emph{Quantitative stochastic homogenization and large-scale regularity}, volume 352.
\newblock Springer, 2019.

\bibitem[Assran et~al.(2020)Assran, Aytekin, Feyzmahdavian, Johansson, and Rabbat]{assran2020advances}
M.~Assran, A.~Aytekin, H.~R. Feyzmahdavian, M.~Johansson, and M.~G. Rabbat.
\newblock Advances in asynchronous parallel and distributed optimization.
\newblock \emph{Proceedings of the IEEE}, 108\penalty0 (11):\penalty0 2013--2031, 2020.

\bibitem[Baez-Duarte(1993)]{baez1993central}
L.~Baez-Duarte.
\newblock Central limit theorem for complex measures.
\newblock \emph{Journal of Theoretical Probability}, 6\penalty0 (1):\penalty0 33--56, 1993.

\bibitem[Berthier et~al.(2020)Berthier, Bach, and Gaillard]{berthier2020accelerated}
R.~Berthier, F.~Bach, and P.~Gaillard.
\newblock Accelerated gossip in networks of given dimension using \uppercase{J}acobi polynomial iterations.
\newblock \emph{SIAM Journal on Mathematics of Data Science}, 2\penalty0 (1):\penalty0 24--47, 2020.

\bibitem[Billingsley(2008)]{billingsley2008probability}
P.~Billingsley.
\newblock \emph{Probability and measure}.
\newblock John Wiley \& Sons, 2008.

\bibitem[Biskup(2011)]{biskup2011recent}
M.~Biskup.
\newblock Recent progress on the random conductance model.
\newblock \emph{Probability Surveys}, 8, 2011.

\bibitem[Boyd et~al.(2006)Boyd, Ghosh, Prabhakar, and Shah]{boyd2006randomized}
S.~Boyd, A.~Ghosh, B.~Prabhakar, and D.~Shah.
\newblock Randomized gossip algorithms.
\newblock \emph{IEEE Transactions on Information Theory}, 52\penalty0 (6):\penalty0 2508--2530, 2006.

\bibitem[Bresters(1973)]{bresters1973equation}
D.~Bresters.
\newblock On the equation of \uppercase{E}uler--\uppercase{P}oisson--\uppercase{D}arboux.
\newblock \emph{SIAM Journal on Mathematical Analysis}, 4\penalty0 (1):\penalty0 31--41, 1973.

\bibitem[Cao et~al.(2006)Cao, Spielman, and Yeh]{cao2006accelerated}
M.~Cao, D.~A. Spielman, and E.~M. Yeh.
\newblock Accelerated gossip algorithms for distributed computation.
\newblock In \emph{44th Annual Allerton Conference on Communication, Control, and Computation}, pages 952--959, 2006.

\bibitem[Celia and Gray(1992)]{celia1992numerical}
M.~A. Celia and W.~G. Gray.
\newblock \emph{Numerical methods for differential equations: fundamental concepts for scientific and engineering applications}.
\newblock Pearson College Division, 1992.

\bibitem[Chatterjee(2013)]{chatterjee2013stochastic}
S.~Chatterjee.
\newblock Stochastic solutions of the wave equation.
\newblock \emph{arXiv preprint arXiv:1306.2382}, 2013.

\bibitem[Curien(2020)]{curien2020random}
N.~Curien.
\newblock Random walks and graphs, 2020.
\newblock lecture notes, available at \url{https://www.imo.universite-paris-saclay.fr/~curien/enseignement.html}.

\bibitem[Dalang et~al.(2008)Dalang, Mueller, and Tribe]{dalang2008feynman}
R.~Dalang, C.~Mueller, and R.~Tribe.
\newblock A feynman-kac-type formula for the deterministic and stochastic wave equations and other pde’s.
\newblock \emph{Transactions of the American Mathematical Society}, 360\penalty0 (9):\penalty0 4681--4703, 2008.

\bibitem[Darboux(1896)]{darboux1896leccons}
G.~Darboux.
\newblock \emph{Le{\c{c}}ons sur la th{\'e}orie g{\'e}n{\'e}rale des surfaces}.
\newblock 1896.

\bibitem[Dimakis et~al.(2008)Dimakis, Sarwate, and Wainwright]{dimakis2008geographic}
A.~Dimakis, A.~Sarwate, and M.~Wainwright.
\newblock Geographic gossip: efficient averaging for sensor networks.
\newblock \emph{IEEE Transactions on Signal Processing}, 56\penalty0 (3):\penalty0 1205--1216, 2008.

\bibitem[Dimakis et~al.(2010)Dimakis, Kar, Moura, Rabbat, and Scaglione]{dimakis2010gossip}
A.~Dimakis, S.~Kar, J.~Moura, M.~Rabbat, and A.~Scaglione.
\newblock Gossip algorithms for distributed signal processing.
\newblock \emph{Proceedings of the IEEE}, 98\penalty0 (11):\penalty0 1847--1864, 2010.

\bibitem[Euler(1770)]{eulerinstitutiones}
L.~Euler.
\newblock Institutiones calculi integralis, vol iii, \uppercase{P}etropoli.
\newblock 1770.

\bibitem[Evans(1998)]{evans1998partial}
L.~Evans.
\newblock Partial differential equations.
\newblock \emph{Graduate studies in mathematics}, 19\penalty0 (2), 1998.

\bibitem[Even et~al.(2021)Even, Berthier, Bach, Flammarion, Gaillard, Hendrikx, Massouli{\'e}, and Taylor]{even2021continuized}
M.~Even, R.~Berthier, F.~Bach, N.~Flammarion, P.~Gaillard, H.~Hendrikx, L.~Massouli{\'e}, and A.~Taylor.
\newblock A continuized view on \uppercase{N}esterov acceleration for stochastic gradient descent and randomized gossip.
\newblock \emph{arXiv preprint arXiv:2106.07644}, 2021.

\bibitem[Genis(1984)]{genis1984finite}
A.~M. Genis.
\newblock On finite element methods for the \uppercase{E}uler--\uppercase{P}oisson--\uppercase{D}arboux equation.
\newblock \emph{SIAM journal on numerical analysis}, 21\penalty0 (6):\penalty0 1080--1106, 1984.

\bibitem[Glowinski and Quaini(2013)]{glowinski2013euler}
R.~Glowinski and A.~Quaini.
\newblock When \uppercase{E}uler-\uppercase{P}oisson-\uppercase{D}arboux meets \uppercase{P}ainlev{\'e} and \uppercase{B}ratu: On the numerical solution of nonlinear wave equations.
\newblock \emph{Methods and Applications of Analysis}, 20\penalty0 (4):\penalty0 405--424, 2013.

\bibitem[Gnedenko(1948)]{gnedenko1948local}
B.~V. Gnedenko.
\newblock On a local limit theorem of the theory of probability.
\newblock \emph{Uspekhi Matematicheskikh Nauk}, 3\penalty0 (3):\penalty0 187--194, 1948.

\bibitem[Kac(1974)]{kac1974stochastic}
M.~Kac.
\newblock A stochastic model related to the telegrapher's equation.
\newblock \emph{The Rocky Mountain Journal of Mathematics}, 4\penalty0 (3):\penalty0 497--509, 1974.

\bibitem[Le~Gall(2018)]{legall2018}
J.-F. Le~Gall.
\newblock \emph{Brownian Motion, Martingales, and Stochastic Calculus}.
\newblock Springer, 2018.

\bibitem[LeVeque(2002)]{leveque2002finite}
R.~J. LeVeque.
\newblock \emph{Finite volume methods for hyperbolic problems}, volume~31.
\newblock Cambridge university press, 2002.

\bibitem[Mallat(1999)]{mallat1999wavelet}
S.~Mallat.
\newblock \emph{A wavelet tour of signal processing}.
\newblock Elsevier, 1999.

\bibitem[Nesterov(2012)]{nesterov2012efficiency}
Y.~Nesterov.
\newblock Efficiency of coordinate descent methods on huge-scale optimization problems.
\newblock \emph{SIAM Journal on Optimization}, 22\penalty0 (2):\penalty0 341--362, 2012.

\bibitem[Poisson(1823)]{poisson1823memoire}
S.~D. Poisson.
\newblock \emph{M{\'e}moire sur l'int{\'e}gration des {\'e}quations lin{\'e}aires aux diff{\'e}rences partielles}.
\newblock J. de l'Ecole Polytechnique, 1823.

\bibitem[Rebeschini and Tatikonda(2017)]{rebeschini2017accelerated}
P.~Rebeschini and S.~Tatikonda.
\newblock Accelerated consensus via min-sum splitting.
\newblock In \emph{Advances on Neural Information Processing Systems}, 2017.

\bibitem[Sardellitti et~al.(2010)Sardellitti, Giona, and Barbarossa]{sardellitti2010fast}
S.~Sardellitti, M.~Giona, and S.~Barbarossa.
\newblock Fast distributed average consensus algorithms based on advection-diffusion processes.
\newblock \emph{IEEE Transactions on Signal Processing}, 58\penalty0 (2):\penalty0 826--842, 2010.

\bibitem[Sethian(1996)]{sethian1996fast}
J.~A. Sethian.
\newblock A fast marching level set method for monotonically advancing fronts.
\newblock \emph{Proceedings of the National Academy of Sciences}, 93\penalty0 (4):\penalty0 1591--1595, 1996.

\bibitem[Szeg\"o(1939)]{szeg1939orthogonal}
G.~Szeg\"o.
\newblock \emph{Orthogonal polynomials}, volume~23.
\newblock American Mathematical Soc., 1939.

\bibitem[Szorenyi et~al.(2013)Szorenyi, Busa-Fekete, Hegedus, Orm{\'a}ndi, Jelasity, and K{\'e}gl]{szorenyi2013gossip}
B.~Szorenyi, R.~Busa-Fekete, I.~Hegedus, R.~Orm{\'a}ndi, M.~Jelasity, and B.~K{\'e}gl.
\newblock Gossip-based distributed stochastic bandit algorithms.
\newblock In \emph{International Conference on Machine Learning}, pages 19--27, 2013.

\bibitem[Zhao(2005)]{zhao2005fast}
H.~Zhao.
\newblock A fast sweeping method for eikonal equations.
\newblock \emph{Mathematics of computation}, 74\penalty0 (250):\penalty0 603--627, 2005.

\end{thebibliography}

\appendix

\newpage

	\section{The Euler--Poisson--Darboux (EPD) equation}
\label{ap:epd}

The EPD equation is the partial differential equation 
\begin{align}
\label{eq:epd-general}
\partial_{tt} u + \frac{2\alpha+1}{t} \partial_t u = \nabla_y \cdot \left(Q\nabla_y u\right) \, .
\end{align}
Posing a rigorous framework for solving this equation is subtle because there is a diverging coefficient $\frac{2\alpha+1}{t}$ as $t \to 0$. Moreover, we see below that fundamental solutions are irregular; they are defined in a weak sense. Thankfully, we do not have to bother with these technical details as our rigorous results only require to know the expression of the fundamental solution of the EPD equation (Proposition~\ref{prop:fundamental-solution}) and its Fourier transform (Proposition~\ref{prop:fourier-fundamental-solution}). These expressions are given by \citet{bresters1973equation} in the case $Q = \Id$; here, we easily extend the expressions for a general matrix $Q$.

\begin{proposition}
	\label{prop:fundamental-solution}
	The fundamental solution of the EPD equation, i.e., the solution initialized from $u(0,.) = \delta_0$, $\partial_t u(0,.) = 0$, is 
	\begin{align}
	\label{eq:fundamental-solution}
	u(t,y) = \frac{\Gamma(\alpha+1)}{\pi^{d/2}\Gamma(\alpha+1-d/2)(\det Q)^{1/2}} \frac{1}{t^{2\alpha}} \left(t^2 - \left\langle y, Q^{-1} y \right\rangle \right)_+^{\alpha-d/2} \, ,
	\end{align}
	where $(.)_+$ denotes the positive part of a real number. 
\end{proposition}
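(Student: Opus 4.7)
The plan is to reduce the general case to the isotropic case $Q = \Id$ treated by \citet{bresters1973equation} via a linear change of variables. Since $Q$ is symmetric and positive definite, choose any invertible $A$ with $Q = AA^\top$ (for instance $A = Q^{1/2}$). The substitution $y = Az$ will turn the EPD equation with coefficient matrix $Q$ into the isotropic EPD equation, and the explicit formula in the isotropic case can then be transported back.

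First I would verify that, setting $\tilde u(t, z) := u(t, Az)$, a short chain-rule computation gives $\nabla_z \tilde u = A^\top \nabla_y u$, hence $\nabla_y u = A^{-\top} \nabla_z \tilde u$ and $Q \nabla_y u = A \nabla_z \tilde u$. Taking the divergence in $y$ and writing $\partial_{y_k} = \sum_j (A^{-1})_{jk} \partial_{z_j}$ yields
\[
\nabla_y \cdot (Q \nabla_y u) \;=\; \sum_{i,j,k} A_{ki}(A^{-1})_{jk}\, \partial^2_{z_j z_i} \tilde u \;=\; \Delta_z \tilde u \, .
\]
Since the time derivatives are unchanged, $u$ solves the EPD equation with matrix $Q$ if and only if $\tilde u$ solves its isotropic counterpart. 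To match the initial data I would write $u(t, y) = C\, \tilde u(t, A^{-1} y)$ and use the scaling identity $\delta_0(A^{-1} y) = |\det A|\, \delta_0(y) = (\det Q)^{1/2}\, \delta_0(y)$; this forces $C = (\det Q)^{-1/2}$ to recover $u(0, \cdot) = \delta_0$, while $\partial_t u(0, \cdot) = 0$ is preserved automatically because the change of variables is time-independent.

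Finally I would substitute the isotropic fundamental solution from \citet{bresters1973equation},
\[
\tilde u(t, z) \;=\; \frac{\Gamma(\alpha+1)}{\pi^{d/2}\Gamma(\alpha+1-d/2)} \, \frac{1}{t^{2\alpha}} \, \bigl( t^2 - \Vert z \Vert^2 \bigr)_+^{\alpha - d/2} \, ,
\]
and use $\Vert A^{-1} y \Vert^2 = y^\top (AA^\top)^{-1} y = \langle y, Q^{-1} y \rangle$ together with the prefactor $(\det Q)^{-1/2}$ to obtain exactly \eqref{eq:fundamental-solution}. The only delicate point is the interpretation of the Dirac initial condition and of the pointwise expression when $\alpha - d/2 < 0$, but since we work in the weak/distributional framework already set up by \citet{bresters1973equation} and linear changes of variables act continuously on distributions, the transport step is valid in the same sense used there. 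I expect this compatibility check with the distributional formulation to be the main technical point, though a routine one, while the chain-rule computation itself is straightforward.
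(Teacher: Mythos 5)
Your proposal is correct and follows essentially the same route as the paper's proof: reduce to the isotropic case of \citet{bresters1973equation} via the linear change of variables $y \mapsto Q^{1/2}y$ (the paper writes $v(t,y) = u(t,Q^{1/2}y)(\det Q)^{1/2}$ and states ``computations give'' that $v$ is the isotropic fundamental solution). You simply make explicit the chain-rule identity $\nabla_y \cdot (Q\nabla_y u) = \Delta_z \tilde u$ and the Jacobian factor $(\det Q)^{1/2}$ needed to normalize the Dirac initial datum, both of which the paper leaves implicit.
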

The case $\alpha = d/2$ is particularly important to us; in this case we recover~\eqref{eq:fundamental-solution-d/2} from~\eqref{eq:fundamental-solution}.

\begin{proof}[Proof of Proposition~\ref{prop:fundamental-solution}]
	In the case $Q = \Id$, the solution is given by \citet[Equation (3.4)]{bresters1973equation}: 
	\begin{align*}
	u(t,y) = \frac{\Gamma(\alpha+1)}{\pi^{d/2}\Gamma(\alpha+1-d/2)} \frac{1}{t^{2\alpha}} \left(t^2 -\Vert y \Vert^2 \right)_+^{\alpha-d/2} \, .
	\end{align*}
	In the general case, consider $v(t,y) = u(t,Q^{1/2}y) (\det Q)^{1/2}$. Computations give that $v(t,y)$ is the fundamental solution of the EPD equation~\eqref{eq:epd-general} with $Q = \Id$, thus 
	\begin{equation*}
	u(t,Q^{1/2}y) (\det Q)^{1/2} = v(t,y) = \frac{\Gamma(\alpha+1)}{\pi^{d/2}\Gamma(\alpha+1-d/2)} \frac{1}{t^{2\alpha}} \left(t^2 -\Vert y \Vert^2 \right)_+^{\alpha-d/2} \, .
	\end{equation*}
	This gives the desired formula. 
\end{proof}

\begin{proposition}
	\label{prop:fourier-fundamental-solution}
	The Fourier transform in space of the fundamental solution~\eqref{eq:fundamental-solution} is
	\begin{align*}
	\hat{u}(t,\xi) = \int_{\R^d} \diff y \, e^{i\langle \xi, y \rangle} u(t,y)  = 2^\alpha \Gamma(\alpha+1)  \left\langle \xi, Q\xi\right\rangle^{-\alpha/2} t^{-\alpha} J_\alpha\left(t \left\langle\xi, Q \xi\right\rangle^{1/2}\right) \, ,
	\end{align*}
	where $J_\alpha$ denotes the Bessel function of the first kind of order $\alpha$ \citep[Section 1.71]{szeg1939orthogonal}.
\end{proposition}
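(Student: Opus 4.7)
The cleanest route is to reduce the general case to the isotropic case $Q=\Id$ (for which the Fourier transform is given by \citet{bresters1973equation}) via exactly the same change of variables used in the proof of Proposition~\ref{prop:fundamental-solution}. So my first step is to recall that Bresters gives, in the isotropic case,
\begin{equation*}
\int_{\R^d} e^{i\langle \eta, y\rangle} \frac{\Gamma(\alpha+1)}{\pi^{d/2}\Gamma(\alpha+1-d/2)} \frac{1}{t^{2\alpha}} \left(t^2 - \Vert y\Vert^2\right)_+^{\alpha-d/2} \, \diff y = 2^{\alpha}\Gamma(\alpha+1)\,\Vert\eta\Vert^{-\alpha}\,t^{-\alpha}\,J_\alpha(t\Vert\eta\Vert),
\end{equation*}
which is the desired formula when $Q=\Id$.

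Next, I would set $v(t,y) = u(t, Q^{1/2}y)(\det Q)^{1/2}$, which is precisely the isotropic fundamental solution identified in the proof of Proposition~\ref{prop:fundamental-solution}. Computing its Fourier transform and substituting $z = Q^{1/2}y$ (so $\diff y = (\det Q)^{-1/2}\diff z$) gives
\begin{equation*}
\hat{v}(t,\eta) = \int_{\R^d} e^{i\langle \eta, y\rangle}u(t,Q^{1/2}y)(\det Q)^{1/2}\diff y = \int_{\R^d} e^{i\langle Q^{-1/2}\eta, z\rangle}u(t,z)\diff z = \hat{u}(t, Q^{-1/2}\eta).
\end{equation*}
Equivalently, $\hat{u}(t,\xi) = \hat{v}(t, Q^{1/2}\xi)$. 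Since $\Vert Q^{1/2}\xi\Vert^2 = \langle \xi, Q\xi\rangle$, plugging $\eta = Q^{1/2}\xi$ into the isotropic formula above yields exactly
\begin{equation*}
\hat{u}(t,\xi) = 2^{\alpha}\Gamma(\alpha+1)\,\langle \xi, Q\xi\rangle^{-\alpha/2}\,t^{-\alpha}\,J_\alpha\!\left(t\langle \xi, Q\xi\rangle^{1/2}\right),
\end{equation*}
as desired. Note that the factor $(\det Q)^{1/2}$ in the definition of $v$ cancels cleanly with the Jacobian of the substitution, which is why no determinant appears in the final formula.

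The only genuine analytic content lies in Bresters' isotropic formula, which itself follows from a standard Hankel transform calculation: since the isotropic fundamental solution is radial, its Fourier transform reduces to a one-dimensional integral $\int_0^t r^{d/2}(t^2-r^2)^{\alpha-d/2} J_{d/2-1}(r\Vert\eta\Vert)\diff r$, and Sonine's first finite integral (see, e.g., \citep{szeg1939orthogonal}) evaluates this in closed form as a Bessel function of order $\alpha$. Given that we are allowed to invoke \citet{bresters1973equation}, the only thing to verify here is the linear-algebraic change of variables, and there is no real obstacle.
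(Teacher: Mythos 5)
Your proposal is correct and follows essentially the same route as the paper's own proof: invoke Bresters' isotropic formula, set $v(t,y) = u(t,Q^{1/2}y)(\det Q)^{1/2}$, and use the change of variables $z = Q^{1/2}y$ to obtain $\hat{u}(t,\xi) = \hat{v}(t,Q^{1/2}\xi)$ with $\Vert Q^{1/2}\xi\Vert^2 = \langle \xi, Q\xi\rangle$. The closing remark on the Sonine/Hankel derivation of the isotropic case is a nice addition but not needed, since the paper simply cites \citet{bresters1973equation} for that step.
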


\begin{proof}
	In the case $Q = \Id$, the result is given by \citet[Equation (3.1)]{bresters1973equation}:
	\begin{align*}
	\hat{u}(t,\xi)  = 2^\alpha \Gamma(\alpha+1) \Vert \xi \Vert^{-\alpha} t^{-\alpha} J_\alpha\left(t \Vert \xi \Vert\right) \, .
	\end{align*}
	In the general case, $v(t,y) = u(t,Q^{1/2}y) (\det Q)^{1/2}$ is a solution of the EPD equation~\eqref{eq:epd-general} with $Q = \Id$. Moreover, 
	\begin{align*}
	\hat{v}(t,\xi) &=\int_{\R^d} \diff y \, e^{i\langle \xi, y \rangle} v(t,y) \\
	&=  (\det Q)^{1/2} \int_{\R^d} \diff y \, e^{i\langle \xi, y \rangle} u(t,Q^{1/2}y) \, .
	\end{align*}
	In the last integral, we change the variable to $x = Q^{1/2} y$. Then $\diff x = \det(Q^{1/2}) \diff y = (\det Q)^{1/2} \diff y$.
	\begin{align*}
	\hat{v}(t,\xi) &=\int_{\R^d} \diff x \, e^{i\langle \xi, Q^{-1/2} x \rangle} u(t,x) \\ 
	&= \hat{u}(t, Q^{-1/2}\xi) \, .
	\end{align*}
	Thus 
	\begin{align*}
	\hat{u}(t, \xi) &= \hat{v}(t,Q^{1/2}\xi) = 2^\alpha \Gamma(\alpha+1) \left\langle \xi, Q\xi\right\rangle^{-\alpha/2} t^{-\alpha} J_\alpha\left(t \left\langle\xi, Q \xi\right\rangle^{1/2}\right) \, .
	\end{align*}
\end{proof}

\section{Proof of Theorems~\ref{thm:weak-epd} and~\ref{thm:local-jpi}}
\label{ap:proofs-epd}

In the following, we use the notation of \citet{berthier2020accelerated}: $P_n^{(\alpha,\beta)}$ denotes the classical Jacobi polynomials, orthogonal for the Jacobi measure $\diff \sigma(\lambda) = (1 - \lambda)^\alpha (1+\lambda)^\beta$ \citep{szeg1939orthogonal}. We also denote $\pi_n^{(\alpha,\beta)} = P_n^{(\alpha,\beta)}/P_n^{(\alpha,\beta)}(1)$ the Jacobi polynomial rescaled so that $\pi_n^{(\alpha,\beta)}(1) = 1$. 

Consider the algebra $(\ell^2(\Z^d),+,*)$ where $+$ denotes the classical addition over functions and $*$ denotes the convolution. As $\omega \in \ell^2(\Z^d)$, this algebra gives a meaning to the evaluation $P(\omega) \in \ell^2(\Z^d)$ of a polynomial $P$ at $\omega$. By its definition from \citet{berthier2020accelerated}, the Jacobi polynomial iteration \eqref{eq:jpi-1}-\eqref{eq:jpi-4} can be written more compactly as 
\begin{equation*}
    x_n = \pi_n^{(d/2,0)}(\omega) \, . 
\end{equation*}

Fourier analysis plays a central role in the proofs below. If $x \in \ell^2(\Z^d)$, the Fourier transform $\hat{x} \in L^2([-\pi,\pi]^d)$ is defined as $\hat{x}(\xi) = \sum_{v \in \Z^d} e^{i\langle \xi, v \rangle} x(v)$. Consider the algebra $(L^2([-\pi,\pi]^d),+,\cdot)$ where $+$ denotes the classical addition over functions and $\cdot$ denotes the pointwise multiplication of functions. As $\hat{\omega} \in L^2([-\pi,\pi]^d)$, this algebra gives a meaning to the evaluation $P(\hat{\omega}) \in L^2([-\pi,\pi]^d)$ of a polynomial $P$ at $\hat{\omega}$.

The Fourier transform of a sum is the sum of the Fourier transforms, and the Fourier transform of a convolution is the pointwise product of the Fourier transforms, thus if $P$ is a polynomial, $\widehat{P(\omega)} = P(\hat{\omega})$. In particular, in the following, we analyze the Jacobi polynomial iteration by using the relation 
\begin{equation*}
    \hat{x}_n = \widehat{\pi_n^{(d/2,0)}(\omega)} = \pi_n^{(d/2,0)}(\hat{\omega}) \, . 
\end{equation*}

The proofs below use the following well-known results on Jacobi polynomials.

\begin{proposition}
	\label{prop:jacobi}
	\begin{enumerate}
		\item \label{it:mehler-heine}(Mehler--Heine asymptotic) The Jacobi polynomials satisfy the following asymptotic at the edge of the orthogonality measure
		\begin{align*}
		\lim_{n \to \infty} \pi_n^{(d/2,0)}\left(1-\frac{z^2}{2n^2}\right) = 2^{d/2} \Gamma\left(\frac{d}{2}+1\right) z^{-d/2} J_{d/2}(z) \, , 
		\end{align*}
		where $J_{d/2}$ denotes the Bessel function of the first kind of order $d/2$ \citep[Section 1.71]{szeg1939orthogonal}. The convergence is uniform for $z$ in compact sets. 
		\item \label{it:bound-jacobi}On the whole support of the orthogonality measure, we have the following bounds: there exists constants $C_1, C_2 > 0$ such that for all $n \geq 0$,
		\begin{align*}
		\left\vert \pi_n^{(d/2,0)}(\lambda) \right\vert \leq \begin{cases}
		C_1 \left(\arccos | \lambda | \right)^{-d/2-1/2} n^{-d/2-1/2} &\text{if }|\lambda| \leq 1-\frac{1}{n^2} \, , \\
		C_2 &\text{otherwise.}
		\end{cases}
		\end{align*}
	\end{enumerate}
\end{proposition}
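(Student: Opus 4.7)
The plan is to deduce both parts from the corresponding unnormalized classical statements in Szegő's monograph~\citep{szeg1939orthogonal}, after dividing through by the normalization constant
\begin{equation*}
P_n^{(d/2,0)}(1) = \binom{n+d/2}{n} \underset{n\to\infty}{\sim} \frac{n^{d/2}}{\Gamma(d/2+1)} \, .
\end{equation*}
This asymptotic is the single ingredient that transports all Szegő-style bounds on $P_n^{(d/2,0)}$ into bounds on $\pi_n^{(d/2,0)} = P_n^{(d/2,0)}/P_n^{(d/2,0)}(1)$.

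For part~(\ref{it:mehler-heine}), I would invoke the classical Mehler--Heine theorem (Szegő, Thm.~8.1.1), which states that for any $\alpha,\beta$,
\begin{equation*}
n^{-\alpha}\, P_n^{(\alpha,\beta)}\!\left(\cos\frac{z}{n}\right) \xrightarrow[n\to\infty]{} \left(\frac{z}{2}\right)^{-\alpha} J_\alpha(z) \, ,
\end{equation*}
uniformly on compact subsets of $\mathbb{C}$. Specializing to $(\alpha,\beta) = (d/2,0)$ and dividing by $P_n^{(d/2,0)}(1)/n^{d/2} \to 1/\Gamma(d/2+1)$ delivers the claimed formula, up to replacing the argument $\cos(z/n)$ by $1 - z^2/(2n^2)$. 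These two arguments differ by $O(z^4/n^4)$; since the derivative of $P_n^{(d/2,0)}$ at $1$ grows only like $n^{d/2+2}$, the corresponding change in $\pi_n^{(d/2,0)}$ is $O(1/n^2)$, uniformly on compact sets in $z$, and does not affect the limit.

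For part~(\ref{it:bound-jacobi}), the plan is to combine Szegő's uniform decay estimate with a trivial global bound. First, Szegő's Thm.~7.32.2 says that for $\alpha \geq \beta \geq -1/2$, $\max_{\lambda\in[-1,1]}|P_n^{(\alpha,\beta)}(\lambda)| = P_n^{(\alpha,\beta)}(1)$; in our setting this gives $|\pi_n^{(d/2,0)}(\lambda)| \leq 1$ on $[-1,1]$, which dispatches the regime $|\lambda| > 1 - 1/n^2$ with $C_2 = 1$. Second, Szegő's Thm.~7.32.1 (or equivalently (8.21.10)) gives
\begin{equation*}
\bigl|P_n^{(d/2,0)}(\cos\theta)\bigr| \leq C\, n^{-1/2}\,\theta^{-d/2-1/2} \, , \qquad \theta \in [c/n, \pi/2] \, ,
\end{equation*}
and dividing by $P_n^{(d/2,0)}(1) \asymp n^{d/2}$ yields the claimed bound with $\theta = \arccos|\lambda|$. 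The complementary range $\theta \in (\pi/2, \pi - c/n]$, corresponding to $\lambda$ close to $-1$, is handled by the reflection identity $P_n^{(\alpha,\beta)}(-\lambda) = (-1)^n P_n^{(\beta,\alpha)}(\lambda)$: applying the same Szegő bound with $(\alpha,\beta)$ swapped to $(0,d/2)$ gives $|P_n^{(0,d/2)}(\cos\theta')| \leq C n^{-1/2}(\theta')^{-1/2}$ for $\theta' = \pi - \theta \in [c/n,\pi/2]$.

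The only subtlety I anticipate is bookkeeping in this near-$(-1)$ regime: the reflected bound produces the exponent $-1/2$ rather than $-d/2-1/2$, but since $\arccos|\lambda|$ is bounded above by $\pi/2 < \pi$, one has $(\arccos|\lambda|)^{-1/2} \leq (\arccos|\lambda|)^{-d/2-1/2}$ up to an absolute constant for $d\geq 0$, so the stated form of the inequality holds after absorbing this into $C_1$. All remaining steps are routine verifications that the constant $c$ from Szegő's statement can be taken equal to $1$ (or absorbed), so that the threshold $|\lambda| \leq 1 - 1/n^2$ stated in the proposition matches the range of validity of the sharp bound.
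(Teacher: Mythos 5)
Your proposal is correct and follows essentially the same route as the paper: part~(\ref{it:mehler-heine}) via Szeg\H{o}'s Theorem~8.1.1 combined with the asymptotic $P_n^{(d/2,0)}(1) = \binom{n+d/2}{n} \sim n^{d/2}/\Gamma(d/2+1)$, and part~(\ref{it:bound-jacobi}) via Szeg\H{o}'s Theorem~7.32.2 together with the reflection identity $P_n^{(\alpha,\beta)}(-\lambda) = (-1)^n P_n^{(\beta,\alpha)}(\lambda)$ for the range $\lambda < 0$. The only differences are cosmetic: you have swapped the labels of Szeg\H{o}'s Theorems~7.32.1 and~7.32.2, and the passage from the argument $\cos(z/n)$ to $1 - z^2/(2n^2)$ is already part of Szeg\H{o}'s statement, so your derivative estimate (while valid) is not needed; your explicit bookkeeping of the exponent near $\lambda = -1$ fills in a detail the paper leaves implicit.
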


\begin{proof}
	\begin{enumerate}
		\item \citet[Theorem 8.1.1]{szeg1939orthogonal} gives the Mehler--Heine asymptotic for the classical Jacobi polynomials $P_n^{(d/2,0)}$:
		\begin{equation*}
		\lim_{n \to \infty} n^{-d/2}P_n^{(d/2,0)}\left(1-\frac{z^2}{2n^2}\right) =  2^{d/2} z^{-d/2} J_{d/2}(z) \, ,
		\end{equation*}
		with uniform convergence for $z$ in compact sets. As 
		\begin{align*}
		&\pi_n^{(d/2,0)} = \frac{P_n^{(d/2,0)}}{P_n^{(d/2,0)}(1)} = \frac{P_n^{(d/2,0)}}{{n+d/2 \choose n}} \, , &&{n+d/2 \choose n} \sim \frac{n^{d/2}}{\Gamma\left(\frac{d}{2} + 1\right)} \, , 
		\end{align*}
		we obtain the desired formula. 
		\item For $\lambda \geq 0$, this is only a reformulation of \citep[Theorem 7.32.2]{szeg1939orthogonal}. For $\lambda < 0$, we use the symmetry of the Jacobi polynomials $P_n^{(d/2,0)}(\lambda) = (-1)^nP_n^{(0,d/2)}(\lambda)$ \citep[Equation (4.1.3)]{szeg1939orthogonal} and use again \citep[Theorem 7.32.2]{szeg1939orthogonal}. 
	\end{enumerate}
\end{proof}

\subsection{Proof of Weak Convergence}

\begin{proof}[Proof of Theorem~\ref{thm:weak-epd}]
	Denote 
	\begin{align*}
	&\mu_{t,\varepsilon} = \sum_{v\in\Z^d} x_{\lfloor t/\varepsilon \rfloor}(v) \delta_{\varepsilon v} \, , &&\mu_t = u(t,y)\diff y
	\end{align*}
	The proof is based on \cite[Theorem 2.1]{baez1993central}, a variant of L\'evy's theorem for signed measures: in order to prove the weak convergence $\mu_{t,\varepsilon} \to \mu_t$ as $\varepsilon \to 0$, it is sufficient to check that the family of measures $\mu_{t,\varepsilon}$, $\varepsilon > 0$ is tight, bounded in total variation, that we have the pointwise convergence of the Fourier transform $\hat{\mu}_{t,\varepsilon}(\xi) = \int_{\R^d} {\mu}_{t,\varepsilon}(\diff y) e^{i\langle \xi, y \rangle}  \to \hat{\mu}_t(\xi) = \int_{\R^d} {\mu}_{t}(\diff y) e^{i\langle \xi, y \rangle} $ almost everywhere. These three conditions are checked below.
	
	\medskip\noindent
	\textbf{Tightness of $\mu_{t,\varepsilon}$, $\varepsilon > 0$.} $\omega$ has a finite support, thus there exists $R > 0$ such that the support of $\omega$ is included in $B(0,R)$. Then for all $n \geq 0$, the support of $w^{*n} = w * \cdots * w$ (with $n$ terms) is included in $B(0,nR)$. The vector $x_n$ is a linear combination of the $w^{*l}$ for $l \leq n$, thus is also included in $B(0,nR)$. Finally, when rescaling by $\varepsilon$, the support of $\mu_{t,\varepsilon} = \sum_{v\in\Z^d} x_{\lfloor t/\varepsilon \rfloor}(v) \delta_{\varepsilon v}$ is included in $B(0,\varepsilon \lfloor t/\varepsilon \rfloor R) \subset B(0,tR)$. The latter set is independent of $\varepsilon$, thus the family of measures $\mu_{t,\varepsilon}$, $\varepsilon > 0$ is tight. 
	
	\medskip\noindent
	\textbf{Boundedness of $\mu_{t,\varepsilon}$, $\varepsilon > 0$.} Note that $\mu_{t,\varepsilon}(\R^d) = 1$, but as $\mu_{t,\varepsilon}$ is a signed measure, we need to show that the total mass $\Vert \mu_{t,\varepsilon} \Vert = \vert \mu_{t,\varepsilon} \vert(\R^d)$ of the total variation $\vert \mu_{t,\varepsilon} \vert$ is bounded independently of $\varepsilon$. By H\"older's inequality, 
	\begin{equation}
	\label{eq:bound-total-variation}
	\Vert \mu_{t,\varepsilon} \Vert = \Vert x_{\lfloor t/\varepsilon \rfloor} \Vert_{l^1(\Z^d)} \leq \vert \Supp x_{\lfloor t/\varepsilon \rfloor} \vert^{1/2} \Vert x_{\lfloor t/\varepsilon \rfloor} \Vert_{l^2(\Z^d)}^{1/2} \, ,
	\end{equation}
	where $ \vert \Supp x_{\lfloor t/\varepsilon \rfloor} \vert$ denotes the cardinal of the support of $x_{\lfloor t/\varepsilon \rfloor}$. As this support is included in $B(0,\lfloor t/\varepsilon \rfloor R)$, its cardinal can be bounded by the number of integer points in $B(0,\lfloor t/\varepsilon \rfloor R)$. This is dominated by $\varepsilon^{-d}$ as $\varepsilon \to 0$. Thus 
	\begin{equation*}
	\vert \Supp x_{\lfloor t/\varepsilon \rfloor} \vert = O(\varepsilon^{-d}) \, .
	\end{equation*}
	We now bound the second term in~\eqref{eq:bound-total-variation}, namely the norm $\Vert x_{\lfloor t/\varepsilon \rfloor} \Vert_{l^2(\Z^d)}$. By Plancherel identity,
	\begin{align*}
	 	\Vert x_n \Vert^2_{\ell^2(\Z^d)} &= \left\Vert \pi_n^{(d/2,0)}(\omega) \right\Vert^2_{\ell^2(\Z^d)} = \frac{1}{(2\pi)^d} \left\Vert \hat{\pi_n^{(d/2,0)}(\omega)} \right\Vert^2_{L^2([-\pi,\pi]^d)} \\ &=  \frac{1}{(2\pi)^d} \left\Vert \pi_n^{(d/2,0)}(\hat{\omega}) \right\Vert^2_{L^2([-\pi,\pi]^d)} = \frac{1}{(2\pi)^d} \int_{[-\pi,\pi]^d} \diff \xi \left\vert \pi_n^{(d/2,0)}(\hat{\omega}(\xi)) \right\vert^2 \, . 
	\end{align*}
	Here, as $\omega$ is symmetric, $\hat{\omega}(\xi)$ is real. We can use the bounds of Proposition~\ref{prop:jacobi}.\eqref{it:bound-jacobi}. We need to estimate $\hat{\omega}(\xi)$. We use the following lemma.
	\begin{lem}
		\label{lem:curien}
		As $\omega$ is aperiodic, there exists $\lambda > 0$ such that 
		\begin{align*}
		\vert \hat{\omega}(\xi) \vert \leq 1-\lambda \Vert \xi \Vert^2 \, , \qquad \xi \in [-\pi,\pi]^d \, .
		\end{align*}
	\end{lem}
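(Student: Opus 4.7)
The plan is to split $[-\pi,\pi]^d$ into a small neighborhood of the origin and its complement, handling each region by different means and patching together the two bounds.

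On a neighborhood of $0$, I would Taylor expand $\hat{\omega}$. Since $\omega$ has finite support, $\hat{\omega}$ is smooth on $\R^d$; the assumptions that $\omega$ is a probability vector, centered, and has covariance $Q$ give
\[
\hat{\omega}(\xi) = 1 - \tfrac{1}{2} \langle \xi, Q \xi \rangle + o(\Vert \xi \Vert^2) \qquad \text{as } \Vert \xi \Vert \to 0,
\]
with $\hat{\omega}$ real-valued because $\omega$ is symmetric. Since $Q$ has full rank, its smallest eigenvalue $q > 0$ satisfies $\langle \xi, Q \xi \rangle \geq q \Vert \xi \Vert^2$. Hence there exists $\delta > 0$ such that $0 \leq \hat{\omega}(\xi) \leq 1 - \tfrac{q}{4} \Vert \xi \Vert^2$ for all $\Vert \xi \Vert \leq \delta$, and in particular $\vert \hat{\omega}(\xi) \vert \leq 1 - \tfrac{q}{4} \Vert \xi \Vert^2$ on this ball.

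On the compact set $K := \{\xi \in [-\pi,\pi]^d : \Vert \xi \Vert \geq \delta\}$, the plan is to invoke aperiodicity to obtain the pointwise strict inequality $\vert \hat{\omega}(\xi) \vert < 1$. If we had $\vert \hat{\omega}(\xi_0) \vert = 1$ for some $\xi_0 \in K$, the equality case of the triangle inequality would force $e^{i \langle \xi_0, v \rangle}$ to be constant over $v \in \Supp \omega$; combined with the symmetry of $\omega$, this constant lies in $\{-1,+1\}$, from which one deduces that $\Supp \omega$ is contained in a proper coset of a sublattice of $\Z^d$, contradicting aperiodicity. By continuity and compactness, $M := \sup_{\xi \in K} \vert \hat{\omega}(\xi) \vert < 1$, and since $\Vert \xi \Vert^2 \leq d \pi^2$ on $[-\pi,\pi]^d$,
\[
\vert \hat{\omega}(\xi) \vert \leq M \leq 1 - (1-M) \leq 1 - \tfrac{1-M}{d \pi^2} \Vert \xi \Vert^2 \qquad \text{for all } \xi \in K.
\]
Choosing $\lambda := \min\{q/4,\, (1-M)/(d\pi^2)\}$ then yields the lemma on all of $[-\pi,\pi]^d$.

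The one genuinely non-automatic step is the implication \emph{aperiodicity $\Rightarrow$ $\vert \hat{\omega} \vert < 1$ off the origin}, since this requires pinning down precisely which flavor of aperiodicity is assumed. This matters because naive aperiodicity is not quite enough: on the standard bipartite lattice \eqref{eq:standard-diffusion-vector}, for instance, $\hat{\omega}(\pi,\dots,\pi) = -1$. The classical strong-aperiodicity hypothesis of the local CLT literature (as in the \citep[Section 8]{billingsley2008probability} reference the paper cites for aperiodicity) is exactly designed to rule out such affine-sublattice degeneracies, and the argument above goes through under that reading.
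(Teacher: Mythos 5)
Your proof is correct and follows what is essentially the canonical argument: the paper itself does not prove this lemma but defers to the cited reference (Curien, Section~7.1), whose proof is the same two-region decomposition --- a Taylor expansion of $\hat{\omega}$ near the origin using centeredness and the full rank of $Q$, and compactness plus the strict inequality $\vert \hat{\omega} \vert < 1$ away from it. Your closing caveat is also well taken: plain Markov-chain aperiodicity is not literally sufficient (e.g.\ a centered, aperiodic walk supported on $\{0,\pm 2e_1\}$ has $\hat{\omega}(\pi)=1$), and the strict inequality off the origin requires the strong-aperiodicity reading (the support of $\omega$ not contained in a coset of a proper sublattice of $\Z^d$), which is the convention of the sources the paper cites and is implicitly guaranteed here by connectivity of the communication graph on $\Z^d$.
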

	This lemma is simple and proved by \citet[Section 7.1]{curien2020random}. We now return to our estimate of $\left\vert \pi_n^{(d/2,0)}(\hat{\omega}(\xi)) \right\vert^2$. 
	\begin{itemize}
		\item If $\Vert \xi \Vert \geq \frac{1}{\sqrt{\lambda} n}$, we have $\vert \hat{\omega}(\xi) \vert \leq 1-\lambda \Vert \xi \Vert^2 \leq 1 - \frac{1}{n^2}$. Thus by Proposition~\ref{prop:jacobi}.\eqref{it:bound-jacobi},  
		\begin{align*}
		\left\vert \pi_n^{(d/2,0)}(\hat{\omega}(\xi)) \right\vert &\leq C_1 \left(\arccos | \hat{\omega}(\xi) | \right)^{-d/2-1/2} n^{-d/2-1/2} \\
		&\leq C_1 \left(\arccos \left(1 - \lambda \Vert \xi \Vert^2\right) \right)^{-d/2-1/2} n^{-d/2-1/2} \, .
		\end{align*}
		\item If $\Vert \xi \Vert < \frac{1}{\sqrt{\lambda} n}$, we can only say $\left\vert \pi_n^{(d/2,0)}(\hat{\omega}(\xi)) \right\vert \leq C_2$. 
	\end{itemize}
	Thus 
	\begin{align*}
	\Vert x_n \Vert^2_{\ell^2(\Z^d)} &= \frac{1}{(2\pi)^d} \int_{[-\pi,\pi]^d} \diff \xi \left\vert \pi_n^{(d/2,0)}(\hat{\omega}(\xi)) \right\vert^2 \\
	&\leq C_3 n^{-d-1} \int_{\{\Vert \xi \Vert \geq 1/(\sqrt{\lambda}n)\}} \diff \xi \left(\arccos \left(1 - \lambda \Vert \xi \Vert^2\right) \right)^{-d-1} + C_4 \int_{\{\Vert \xi \Vert < 1/(\sqrt{\lambda}n)\}} \diff \xi 
	\end{align*}
	where we use the notation $C_i$ to denote constants independent of $n$. We use a spherical change of variables in the first integral:
	\begin{align}
	\label{eq:aux5-1}
	\Vert x_n \Vert^2_{\ell^2(\Z^d)} 
	&\leq C_5 n^{-d-1} \int_{1/(\sqrt{\lambda}n)}^{\sqrt{d}\pi} \diff r \, r^{d-1}\left(\arccos \left(1 - \lambda r^2\right) \right)^{-d-1} + C_6 n^{-d} \, .
	\end{align}
	As $r \to 0$, $\arccos(1 - \lambda r^2) \sim \sqrt{2\lambda} r$ and therefore 
	\begin{align*}
	r^{d-1}\left(\arccos \left(1 - \lambda r^2\right) \right)^{-d-1} \sim \sqrt{2} \lambda^{-d/2-1/2} r^{-2} \, .
	\end{align*}
	Thus $r^{d-1}\left(\arccos \left(1 - \lambda r^2\right) \right)^{-d-1}$ is not integrable at $0$. We then have, as $n \to \infty$, 
	\begin{align*}
	\int_{1/(\sqrt{\lambda}n)}^{\sqrt{d}\pi} \diff r \, r^{d-1}\left(\arccos \left(1 - \lambda r^2\right) \right)^{-d-1} \sim C_7 \int_{1/(\sqrt{\lambda}n)}^{\sqrt{d}\pi} \diff r \, r^{-2} \sim C_8 n \, .
	\end{align*}
	Putting back in~\eqref{eq:aux5-1}, we obtain $\Vert x_n \Vert^2_{\ell^2(\Z^d)} = O(n^{-d})$. Finally, getting back to ~\eqref{eq:bound-total-variation}, we obtain as $\varepsilon \to 0$
	\begin{align*}
	\Vert \mu_{t,\varepsilon} \Vert \leq \vert \Supp x_{\lfloor t/\varepsilon \rfloor} \vert^{1/2} \Vert x_{\lfloor t/\varepsilon \rfloor} \Vert_{l^2(\Z^d)}^{1/2} = O(\varepsilon^{-d/2})O\left( \left\lfloor  \frac{t}{\varepsilon}\right\rfloor^{-d/2}\right) = O(1) \, .
	\end{align*}
	This shows that the family of measures $\mu_{t,\varepsilon}$, $\varepsilon > 0$ is bounded in total variation.
	
	\medskip\noindent
	\textbf{Pointwise convergence of the Fourier transform.}
	\begin{align*}
	\hat{\mu}_{t,\varepsilon}(\xi) = \int_{\R^d} \diff{\mu}_{t,\varepsilon}(y) e^{i\langle \xi, y \rangle} = \sum_{v\in\Z^d} x_{\lfloor t/\varepsilon\rfloor}(v) e^{i\langle \xi, \varepsilon v \rangle} = \hat{x}_{\lfloor t/\varepsilon\rfloor}(\varepsilon \xi) = \pi_{\lfloor t/\varepsilon\rfloor}^{(d/2,0)}\left(\hat{\omega}(\varepsilon\xi)\right) \, .
	\end{align*}
	As $\varepsilon \to 0$, 
	\begin{equation*}
	\hat{\omega}(\varepsilon\xi) = 1-\frac{\varepsilon^2}{2} \left\langle \xi, Q \xi \right\rangle + o(\varepsilon^2) \, .
	\end{equation*}
	We now apply Proposition~\ref{prop:jacobi}.\eqref{it:mehler-heine}: 
	\begin{align*}
	\hat{\mu}_{t,\varepsilon}(\xi) &= \pi_{\lfloor t/\varepsilon\rfloor}^{(d/2,0)}\left(1-\frac{\varepsilon^2}{2}\left(\left\langle \xi, Q \xi \right\rangle + o(1)\right)\right) = \pi_{\lfloor t/\varepsilon\rfloor}^{(d/2,0)}\left(1-\frac{t^2\left\langle \xi, Q \xi \right\rangle + o(1)}{2 \lfloor t/\varepsilon\rfloor^2}\right) \\
	&\xrightarrow[\varepsilon \to 0]{} 2^{d/2} \Gamma\left(\frac{d}{2}+1\right) \langle \xi, Q \xi \rangle^{-d/4} t^{-d/2} J_{d/2}\left(t \langle \xi, Q \xi \rangle^{1/2}\right) = \hat{u}(t,\xi) = \hat{\mu}_t(\xi)\, ,
	\end{align*}
	where we used the formula for $\hat{u}(t,\xi)$ from Proposition~\ref{prop:fourier-fundamental-solution}. This finishes the proof of the weak convergence.
\end{proof}

\subsection{Proof of Local Convergence with Sinc Filter}

% We will first prove the following weaker result. 

% \begin{proposition}[Local Convergence with Sinc Filter]
% 	\label{prop:local-jpi-sinc}
% 	Denote $\psi(y) = \prod_{i=1}^{d} \sinc(y(i))$ where $y(i)$ is the $i$-th component of $y \in \R^d$ and $\sinc(y) = \frac{\sin(\pi y)}{\pi y }$.
% 	Then 
% 	\begin{align*}
% 	\sum_{v \in \Z^d} \left( x_n(v) - \left(u(n,.) * \psi\right)(v)\right)^2 = o\left(\frac{1}{n^d}\right) \qquad \text{as }n \to \infty. 
% 	\end{align*}
% \end{proposition}

% \begin{proof}[Proof of Proposition~\ref{prop:local-jpi-sinc}]
\begin{proof}[Proof of Theorem~\ref{thm:local-jpi}]
	This proof is similar to the one of Theorem~\ref{thm:weak-epd}. By Plancherel's formula, 
	\begin{equation}
	\label{eq:aux5-2}
	\sum_{v \in \Z^d} \left( x_n(v) - \left(u(n,.) * \psi\right)(v)\right)^2 
	= \left\Vert x_n - u(n,.)*\psi\right\Vert^2_{\ell^2(\Z^d)} 
	= \frac{1}{(2\pi)^d}\left\Vert \hat{x}_n - \hat{u(n,.)*\psi}\right\Vert^2_{L^2([-\pi,\pi]^d)} \,. 
	\end{equation}
	In this last expression, we take the Fourier transform of $u(n,.)*\psi$ as a function of $v \in \Z^d$. However, to decompose the computation, let us first compute the Fourier transform of $y \in \R^d \mapsto (u(n,.)*\psi)(y)$. The Fourier product of this convolution is the product of the Fourier transforms, and $\psi$ is chosen specifically so that its Fourier transform is $\hat{\psi}(\xi) = \bfone_{\{\xi \in [-\pi, \pi]^d\}}$. As a consequence, the Fourier transform of $y \in \R^d \mapsto (u(n,.)*\psi)(y)$ is $\xi \in \R^d \mapsto \hat{u}(t,\xi) \bfone_{\{\xi \in [-\pi, \pi]^d\}}$.
	
	We now discretize this function and seek the Fourier transform of $v \in \Z^d \mapsto (u(n,.)*\psi)(v)$. The Fourier transform of the discretization is the periodization of the Fourier transform \cite[Theorem 3.1]{mallat1999wavelet}, thus the Fourier transform of $v \in \Z^d \mapsto (u(n,.)*\psi)(v)$ is $\xi \in [-\pi, \pi]^d \mapsto \hat{u}(n,\xi)$. 
	
	We obtain 
	\begin{align*}
	\left\Vert \hat{x}_n - \hat{u(n,.)*\psi}\right\Vert^2_{L^2([-\pi,\pi]^d)} = \int_{[-\pi,\pi]^d} \diff\xi \left(\pi_n^{(d/2,0)}\left(\hat{\omega}(\xi)\right) - \hat{u}(n,\xi)\right)^2 \, .
	\end{align*}
	We make the change of variables $\zeta = n\xi$:
	\begin{align}
	\label{eq:aux5-3}
	\left\Vert \hat{x}_n - \hat{u(n,.)*\psi}\right\Vert^2_{L^2([-\pi,\pi]^d)} = n^{-d}\int_{\R^d} \diff\zeta \left(\pi_n^{(d/2,0)}\left(\hat{\omega}\left(\frac{\zeta}{n}\right)\right) - \hat{u}\left(n,\frac{\zeta}{n}\right)\right)^2 \bfone_{\{\zeta \in [-n\pi,n\pi]^d\}} \, .
	\end{align}
	Fix $\zeta \in \R^d$. Using the Mehler--Heine asymptotic (Proposition~\ref{prop:jacobi}.\eqref{it:mehler-heine}), we prove that 
	\begin{align*}
	\pi_n^{(d/2,0)}\left(\hat{\omega}\left(\frac{\zeta}{n}\right)\right) - \hat{u}\left(n,\frac{\zeta}{n}\right) \xrightarrow[n\to\infty]{} 0 \, .
	\end{align*}
	We do not repeat the computations because they are similar to the pointwise convergence in the proof of Theorem~\ref{thm:weak-epd}. This proves that the integrand of~\eqref{eq:aux5-3} converges pointwise to $0$. We want to apply the dominated convergence theorem to conclude, and thus seek a domination of 
	\begin{align}
	&\left(\pi_n^{(d/2,0)}\left(\hat{\omega}\left(\frac{\zeta}{n}\right)\right) - \hat{u}\left(n,\frac{\zeta}{n}\right)\right)^2 \bfone_{\{\zeta \in [-n\pi,n\pi]^d\}} \\ 
	&\qquad\leq 2\pi_n^{(d/2,0)}\left(\hat{\omega}\left(\frac{\zeta}{n}\right)\right)^2 \bfone_{\{\zeta \in [-n\pi,n\pi]^d\}} + 2 \hat{u}\left(n,\frac{\zeta}{n}\right)^2 \, . \label{eq:aux5-4}
	\end{align}
	By scale invariance of the EPD equation (or, more simply, from Proposition~\ref{prop:fourier-fundamental-solution}), $\hat{u}\left(n,\frac{\zeta}{n}\right) = \hat{u}\left(1,{\zeta}\right)$. Further, by Plancherel's theorem, 
	\begin{align*}
	\int_{\R^d} \diff\zeta \, \hat{u}(1,\zeta)^2 = (2\pi)^d \int_{\R^d} \diff y \, u(1,y)^2 < \infty \, ,
	\end{align*}
	thus the second term of~\eqref{eq:aux5-4} is independent of $n$ and integrable. We now need to find a domination for the first term. Here, the reasoning is similar to the boundedness of $\mu_{t, \varepsilon}$ in the proof of Theorem~\ref{thm:weak-epd}.
	\begin{itemize}
		\item If $\Vert \zeta \Vert \geq \frac{1}{\sqrt{\lambda}}$, by Lemma~\ref{lem:curien}, $\left\vert \hat{\omega}\left(\frac{\zeta}{n}\right)\right\vert \leq 1 - \frac{1}{n^2}$, thus by Proposition~\ref{prop:jacobi}.\eqref{it:bound-jacobi}, 
		\begin{align*}
		\pi_n^{(d/2,0)}\left(\hat{\omega}\left(\frac{\zeta}{n}\right)\right)^2 &\leq C_1^2 \left(\arccos\left\vert \hat{\omega}\left(\frac{\zeta}{n}\right)\right\vert\right)^{-d-1} n^{-d-1} \\
		&\leq C_1^2 \left(\arccos\left(1-\lambda \frac{\Vert\zeta\Vert^2}{n^2}\right)\right)^{-d-1} n^{-d-1} \, .
		\end{align*}
		There exists $C_9 > 0$ such that $\arccos(1-z) \geq C_9 \sqrt{z}$. Thus 
		\begin{align*}
		\pi_n^{(d/2,0)}\left(\hat{\omega}\left(\frac{\zeta}{n}\right)\right)^2 &\leq C_{10} \Vert \zeta \Vert^{-d-1} \, .
		\end{align*}
		\item If $\Vert \zeta \Vert < \frac{1}{\sqrt{\lambda}}$, then 
		\begin{align*}
		\pi_n^{(d/2,0)}\left(\hat{\omega}\left(\frac{\zeta}{n}\right)\right)^2 &\leq C_2^2 \, .
		\end{align*}
			\end{itemize} 
		We thus define the domination 
		\begin{align*}
		g(\zeta) = \begin{cases}
		C_{10} \Vert \zeta \Vert^{-d-1} &\text{if }\Vert \zeta \Vert \geq \frac{1}{\sqrt{\lambda}}, \\
		C_2^2 &\text{if }\Vert \zeta \Vert < \frac{1}{\sqrt{\lambda}}.
		\end{cases}
		\end{align*}
		This domination is integrable on $\R^d$; this concludes the theorem. 
\end{proof}

\section{Proof of Corollary~\ref{coro:sharp-rates}}
	Note that $\sum_{v \in \Z^d} x_n(v)^2 = \Vert x_n \Vert^2_{l^2(\Z^d)}$ and by \cref{thm:local-jpi}, 
	\begin{align*}
	\left\vert  \left\Vert x_n \right\Vert_{l^2(\Z^d)} - \left\Vert u(n,.) * \psi \right\Vert_{l^2(\Z^d)} \right\vert \leq \left\Vert x_n - u(n,.) * \psi\right\Vert_{l^2(\Z^d)} = o(n^{-d/2}) \, .
	\end{align*}
	It is thus sufficient to prove that 
	\begin{align*}
	\left\Vert u(n,.) * \psi \right\Vert^2_{l^2(\Z^d)}\sim \frac{1}{(\det Q)^{1/2} |B(0,1)| }\frac{1}{n^d} \, .
	\end{align*}
	In the proof of \cref{thm:local-jpi}, we explain that the Fourier transform of $v \in \Z^d \mapsto (u(n,.)*\psi)(v)$ is $\xi \in [-\pi, \pi]^d \mapsto \hat{u}(n,\xi)$. Thus by Plancherel's theorem,
	\begin{align*}
	\left\Vert u(n,.) * \psi \right\Vert^2_{l^2(\Z^d)} =  \frac{1}{(2\pi)^d} \left\Vert \hat{u}(n,.)\right\Vert^2_{L^2([-\pi,\pi]^d)} = \frac{1}{(2\pi)^d}\int_{[-\pi,\pi]^d} \diff \xi \, \hat{u}(n,\xi)^2 \, .
	\end{align*} 
	We make the change of variables $\zeta = \xi/n$. Note that by scale invariance of the EPD equation (or, more simply, from Proposition~\ref{prop:fourier-fundamental-solution}), $\hat{u}\left(n,\frac{\zeta}{n}\right) = \hat{u}\left(1,{\zeta}\right)$. Thus 
	\begin{align}
	\label{eq:aux5-5}
	\left\Vert u(n,.) * \psi \right\Vert^2_{l^2(\Z^d)} =  \frac{1}{(2\pi)^dn^d} \int_{[-n\pi,n\pi]^d} \diff \zeta \, \hat{u}(1,\zeta)^2 = \frac{1}{(2\pi)^dn^d} \left(\int_{\R^d} \diff \zeta \, \hat{u}(1,\zeta)^2 +o(1) \right) \, .
	\end{align}
	We use again Plancherel's theorem and then~\eqref{eq:fundamental-solution-d/2}: 
	\begin{align*}
	\frac{1}{(2\pi)^d} \int_{\R^d} \diff \zeta \, \hat{u}(1,\zeta)^2 = \int_{\R^d} \diff y \, u(1,y)^2 =  \left(\frac{\Gamma(d/2+1)}{\pi^{d/2}(\det Q)^{1/2}}\right)^2 \left\vert \left\{y \middle\vert \left\langle y , Q^{-1} y \right\rangle \leq 1 \right\}\right\vert \, .
	\end{align*}
	As the volume of the $d$-dimensional unit ball is $|B(0,1)| = \frac{\pi^{d/2}}{\Gamma\left(d/2+1\right)}$, the volume of the ellipsoid $\left\vert \left\{y \middle\vert \left\langle y , Q^{-1} y \right\rangle \leq 1 \right\}\right\vert$ is $\frac{\pi^{d/2}(\det Q)^{1/2}}{\Gamma\left(d/2+1\right)}$, thus 
	\begin{align*}
	\frac{1}{(2\pi)^d} \int_{\R^d} \diff \zeta \, \hat{u}(1,\zeta)^2 =\frac{\Gamma(d/2+1)}{\pi^{d/2}(\det Q)^{1/2}} = \frac{1}{(\det Q)^{1/2} |B(0,1)|} \, .
	\end{align*}
	Substituting in~\eqref{eq:aux5-5}, this concludes the proof.

\end{document}